\newtheorem{definition}{Definition}[section]
\newtheorem{theorem}{Theorem}[section]
\newtheorem{corollary}{Corollary}[theorem]
\newtheorem{lemma}[theorem]{Lemma}
\newtheorem{proposition}[theorem]{Proposition}
\newtheorem{informal}[theorem]{Informal Theorem}
\newcommand{\snm}[3]{\ensuremath{#1\text{-SNM}_{#2}\text{-}#3}}
\title{Approximately Strategyproof Tournament Rules in the Probabilistic Setting}
\author{Kimberly Ding and S. Matthew Weinberg}
\date{}
\begin{document}

\maketitle
\addtocounter{page}{-1}
\begin{abstract}
We consider the manipulability of tournament rules which map the results of $\binom{n}{2}$ pairwise matches and select a winner. Prior work designs simple tournament rules such that no pair of teams can manipulate the outcome of their match to improve their probability of winning by more than $1/3$, and this is the best possible among any Condorcet-consistent tournament rule (which selects an undefeated team whenever one exists)~\cite{SchneiderSW17,SchvartzmanWZZ20}. These lower bounds require the manipulators to know precisely the outcome of all future matches.

We take a beyond worst-case view and instead consider tournaments which are ``close to uniform'': the outcome of all matches are independent, and no team is believed to win any match with probability exceeding $1/2+\varepsilon$. We show that Randomized Single Elimination Bracket~\cite{SchneiderSW17} and a new tournament rule we term Randomized Death Match have the property that no pair of teams can manipulate the outcome of their match to improve their probability of winning by more than $\varepsilon/3 + 2\varepsilon^2/3$, for all $\varepsilon$, and this is the best possible among any Condorcet-consistent tournament rule.

Our main technical contribution is a recursive framework to analyze the manipulability of certain forms of tournament rules. In addition to our main results, this view helps streamline previous analysis of Randomized Single Elimination Bracket, and may be of independent interest.
\end{abstract}

\newpage
\section{Introduction}
A tournament consists of $n$ teams competing to win a championship via pairwise matches, and a tournament rule (possibly randomly) selects a single winner as a result of these matches. Tournament rules have been studied within Social Choice Theory for decades~\cite{Fishburn77,Miller80,ShepsleW84, Banks85, FisherR92,LaffondLB93,
Laslier97}, see also~\cite{BrandtCELP16} for a survey, and have gained attention from a few angles within TCS more recently~\cite{AltmanPT09, AltmanK10,Williams10, StantonW11, KimW15,KimSW16, SchneiderSW17, SchvartzmanWZZ20}. Our work follows the model studied in~\cite{AltmanPT09, AltmanK10, SchneiderSW17, SchvartzmanWZZ20} and seeks to design tournaments which are both fair (in that they select a reasonable winner, based on the match outcomes), and ``as strategyproof as possible'' subject to this.

More specifically, these works acknowledge that an undefeated team, if one exists, should surely win any reasonable tournament format. Formally, this property is termed \emph{Condorcet-consistent} (Definition~\ref{def:cc}). These works also consider the possibility of two teams strategically manipulating the match between them to improve the probability that one of them wins. The situation to have in mind is that perhaps two teams sponsored by the same company enter an eSports tournament, and wish to maximize the probability that either of them take home the prize money. A rule is $2$-Strongly Non-Manipulable ($2$-SNM, Definition~\ref{def:snm}) if no pair of teams can successfully manipulate it to improve the probability that one of them wins.

Initial works quickly established that no tournament rule exists which is both Condorcet-consistent and $2$-SNM~\cite{AltmanPT09, AltmanK10}. More recent works study the extent to which tournament rules can be Condorcet-consistent and \emph{approximately} $2$-SNM. Specifically, a rule is $2$-SNM-$\alpha$ if no pair of teams can improve the probability that one of them wins by more than $\alpha$ (Definition~\ref{def:snm}).~\cite{SchneiderSW17,SchvartzmanWZZ20} design simple Condorcet-consistent tournament rules (which we define in Section~\ref{sec:recursive}) which are $2$-SNM-$1/3$, and also establish that this is the best possible \emph{worst-case} guarantee.

\subsection{Our Results: Probabilistic Tournaments}
The lower bounds in previous works assume that the \emph{deterministic future outcome} of all matches is known at the time of manipulation. While this is certainly a plausible scenario, most competitions worth watching have some element of uncertainty, even for matches between strong and weak teams. Indeed,~\cite[Open Problem~2]{SchneiderSW17} explicitly asks whether improved guarantees are possible if instead the teams have a common Bayesian prior about the possible outcomes of future matches which is \emph{bounded away from determinstic}. 

For example, consider the case where the outcome of all matches are uniformly at random. Then it is not hard to design a Condorcet-consistent tournament rule which is $2$-SNM in this case. One example is a simple single-elimination bracket: when two manipulating teams face each other, each of them is equally likely to continue on and win the tournament, so manipulating doesn't help. 

What if instead the outcome of all matches are not uniformly random, but close? More specifically, what if the match results are independent, and no team wins any match with probability more than $1/2+\varepsilon$? When $\varepsilon =0$, the previous paragraph establishes that rules exist where profitable manipulation is impossible. When $\varepsilon = 1/2$,~\cite{SchneiderSW17,SchvartzmanWZZ20} establish that $2$-SNM-$1/3$ tournaments exist, but no better. What about when $\varepsilon \in (0,1/2)$? How do the achievable guarantees vary as a function of $\varepsilon$?

Our main result resolves this question, and nails down the guarantees precisely as a function of $\varepsilon$. Moreover, we show that \emph{the same tournament rule achieves the optimal guarantee for all $\varepsilon$}. Below, Randomized Single Elimination Bracket (Definition~\ref{def:rseb}, henceforth RSEB) randomly seeds all teams, then runs a single-elimination bracket to determine the winner, and was shown to be $2$-SNM-$1/3$ in~\cite{SchneiderSW17}. Randomized Death Match (Definition~\ref{def:rdm}, henceforth RDM) repeatedly picks two uniformly random teams to play a match, and eliminates the loser (and is first analyzed in this paper).

\begin{informal}[See Theorems~\ref{thm:RDMmain},~\ref{thm:RSEBmain}] For all $\varepsilon \in [0,1/2]$, Randomized Death Match and Randomized Single Elimination Bracket are $2$-SNM-$(\varepsilon/3+2\varepsilon^2/3)$ when match outcomes are independent, and no team wins any match with probability more than $1/2+\varepsilon$. Moreover, for all $\varepsilon \in [0,1/2]$, no Condorcet-consistent tournament rule is $2$-SNM-$\alpha$ for any $\alpha < \varepsilon/3+2\varepsilon^2/3$, on the set of all tournaments with independent match outcomes that no team wins with probability more than $1/2+\varepsilon$.
\end{informal}

\subsection{Technical Highlights}
We prove our main result by finding a strong structural similarity between tournaments like RDM and RSEB: they can be defined recursively. Specifically, RDM could be alternatively defined as ``Pick two teams uniformly at random, and eliminate the loser of their match. Then, recurse on the remaining teams.'' Similarly, RSEB can be alternatively defined as ``Pick a uniformly random perfect matching between the teams, and eliminate all teams which lose their match. Then, recurse on the remaining teams.'' Even the Randomized King of the Hill rule (Definition~\ref{def:rkoth}, henceforth RKotH) defined by~\cite{SchvartzmanWZZ20} fits this framework as well: ``Pick a uniformly random team to play all other teams. Eliminate all teams who lose a match, and recurse on the remaining teams.'' 

In Section~\ref{sec:recursive}, we give a formal definition of what it means to be a recursive tournament rule. And in Section~\ref{sec:main} (specifically, Theorem~\ref{thm:recursive}), we provide a general framework to analyze the manipulability of recursive tournament rules on probabilistic tournaments. This provides a fairly clean outline to analyze recursive tournament rules, and our main result then applies this framework to RDM and RSEB. In the $\varepsilon = 1/2$ case, our analysis of RSEB in isolation is perhaps not much simpler than that of~\cite{SchneiderSW17}, but our proof is arguably more structured. Indeed, a substantial fraction of our proof can be applied verbatim to other tournament rules like RDM, or applied verbatim to the $\varepsilon < 1/2$ case.

It is worth noting that our framework does face some technical barriers in accommodating RKotH (and we leave open whether RKotH achieves the same guarantees as RDM and RSEB). But, the technical barrier is easy to describe: the matches played in each round of RDM and RSEB form a matching --- no team plays more than one match. In RKotH, some team plays multiple matches. It seems likely that our analysis would extend (perhaps with messier calculations) to any recursive rule where each round's matches form a matching. But we highlight the aspects of our analysis which rely on this aspect of RDM/RSEB (and therefore don't hold for RKotH), and believe this is a genuine barrier. 

\subsection{Further Related Work}
We've already discussed the most related work~\cite{AltmanPT09,AltmanK10,SchneiderSW17, SchvartzmanWZZ20}. The model is first posed in~\cite{AltmanPT09}, and~\cite{AltmanK10} design tournaments which are $2$-SNM and approximately Condorcet-consistent (e.g. pick a uniformly random match and declare the winner of that match to win the tournament).~\cite{SchneiderSW17} first proposed to instead consider rules which are Condorcet-consistent and approximately strategyproof, and establishes that RSEB is $2$-SNM-$1/3$ and that this is optimal.~\cite{SchvartzmanWZZ20} considers larger manipulating sets (not relevant to this paper) and also designs RKotH, showing that it too is $2$-SNM-$1/3$ and satisfies a stronger notion of fairness termed ``cover-consistent''. In relation to these works, our main contribution is going beyond the worst-case to derive improved bounds when match outcomes are more uncertain. A technical contribution is our framework of recursive tournament rules.

Other recent works within TCS focus specifically on single-elimination brackets and manipulation in the form of a bracket designer trying to get a certain team to win~\cite{Williams10,StantonW11,KimW15,KimSW16}, or manipulability of particular tournament rules such as the World Cup qualifying procedure~\cite{Pauly14,Csato17}. Aside from being thematically related, there is no significant technical overlap with our work.

\subsection{Roadmap}
Section~\ref{sec:prelim} immediately follows with definitions and preliminaries. Section~\ref{sec:recursive} provides definitions concerning recursive tournament rules, and formally defines RDM and RSEB. Section~\ref{sec:main} provides our framework for analyzing the manipulability of recursive tournament rules. Section~\ref{sec:deterministic} applies this framework, as a warmup, to rederive the main result of~\cite{SchneiderSW17} and analyze RDM/RSEB in the deterministic case. Section~\ref{sec:epsilon} proves our main results, and Section~\ref{sec:conclusion} provides a brief conclusion.

\section{Preliminaries}\label{sec:prelim}
\subsection{Tournament Rule Basics}
In this section, we introduce notation consistent with prior work~\cite{AltmanK10, SchneiderSW17, SchvartzmanWZZ20}. 

\begin{definition}[Deterministic Tournament] A (round robin) tournament $T$ on $n$ teams is a complete, directed graph on $n$ vertices whose edges denote the outcome of a match between two teams. Team $i$ beats team $j$ if the edge between them points from $i$ to $j$. 
\end{definition}

\begin{definition}[Tournament Rule] A tournament rule $r$ is a function that maps (deterministic) tournaments $T$ to a distribution over teams, where $r_i(T):=\Pr[r(T)=i]$ denotes the probability that team $i$ is declared the winner of tournament $T$ under rule $r$. We use the shorthand $r_S(T):=\sum_{i \in S} r_i(T)$ to denote the probability that a team in $S$ is declared the winner of tournament $T$ under rule $r$.
\end{definition}

Finally, we are interested in tournament rules which satisfy basic notions of fairness. Importantly, note that Condorcet-consistence is a minimal notion of fairness, and in particular does not constrain the behavior of $r$ on any tournament without a Condorcet winner.

\begin{definition}[Condorcet-Consistent]\label{def:cc} Team $i$ is a Condorcet winner of a tournament $T$ if $i$ beats every other team (under $T$). A tournament rule $r$ is Condorcet-consistent if for every tournament $T$ with a Condorcet winner $i$, $r_i(T) = 1$ (whenever $T$ has a Condorcet winner, that team wins with probability $1$).
\end{definition}

\subsection{Independent Probabilistic Tournaments}
In this work, we study probabilistic tournaments. That is, we are interested in tournaments where the outcome of each match is not known to teams ``in advance'', but teams share a Bayesian prior about the likelihood of each possible outcome. In particular, we consider when match outcomes are independent.

\begin{definition}[Independent Probabilistic Tournament]\label{def:independent} A \emph{probabilistic tournament} $T$ is just a distribution over deterministic tournaments. For notational convenience, we slightly abuse notation and refer by $r_i(T)$ to $\mathbb{E}[r_i(T)]$ (that is, $r_i(T)$ is the probability that team $i$ wins when rule $r$ is applied to $T$, over the randomness in $r$ \emph{and the randomness in drawing $T$}). A probabilistic tournament $T$ is \emph{independent} if all match outcomes in $T$ are independent events. Observe that a probabilistic tournament $T$ is fully defined by probabilities $p^T_{ij}$ for all $i < j$, where $p^T_{ij}$ denotes the probability that $i$ beats $j$ in tournament $T$. 
\end{definition}

Observe that deterministic tournaments are also independent probabilistic tournaments, with each $p_{ij}^T \in \{0,1\}$. Like prior work, we study tournament rules which are ``as strategyproof as possible''. Because of our focus on independent probabilistic tournaments, we first refine previous definitions of non-manipulability.

\begin{definition}[$S$-adjacent] Two independent probabilistic tournaments $T, T'$ are \emph{$S$-adjacent} if $p^T_{ij} = p^{T'}_{ij}$ whenever $\{i,j\} \not \subseteq S$. That is, two independent probabilistic tournaments are $S$-adjacent when all (probabilistic) match outcomes are identical, except possibly for matches between two teams in $S$.
\end{definition}

Intuitively, two tournaments $T, T'$ are $S$-adjacent if the set of teams $S$ can manipulate the outcomes of matches between them in advance and cause the resulting (probabilistic) tournament to go from $T$ to $T'$.

\begin{definition}[Manipulating a Tournament]\label{def:snm} For a set $S$ of teams, independent probabilistic tournament $T$, and tournament rule $r$, we define $\alpha^r_S(T)$ to be the maximum winning probability that $S$ can possibly gain by manipulating $T$ to an $S$-adjacent $T'$. That is: $\alpha^r_S(T):=\max_{T': T' \text{ is $S$-adjacent to } T}\{r_S(T') - r_S(T)\}$.

For a class $\mathcal{T}$ of (independent probabilistic) tournaments, we also define $\alpha_k^r(\mathcal{T}):=\max_{T \in \mathcal{T}, S:|S| \leq k}\{\alpha_S^r(T)\}$. If $\alpha_k^r(\mathcal{T}) \leq \alpha$, we say $r$ is \emph{$k$-Strongly Non-Manipulable at probability $\alpha$ with respect to $\mathcal{T}$} (\snm{k}{\mathcal{T}}{\alpha}). To match notation of prior work, we also say a tournament is $k$-SNM-$\alpha$ if it is \snm{k}{\mathcal{S}^{1/2}}{\alpha}.\footnote{Note that $\mathcal{S}^{1/2}$ is the set of all deterministic tournaments, defined shortly below.}

Finally, we also define $\alpha_k(\mathcal{T}) = \inf_{\text{Condorcet consistent } r}\{\alpha_k^r(\mathcal{T})\}$.
\end{definition}

Intuitively, $r$ is \snm{k}{\mathcal{T}}{\alpha} if no colluding set of $\leq k$ teams can manipulate a tournament in $\mathcal{T}$ to improve the probability the winner is in $S$ by more than $\alpha$. The refinement over prior work is that the condition only holds for tournaments in $\mathcal{T}$ --- prior work only considers guarantees that hold over \emph{all} tournaments. The additional notation in Definition~\ref{def:snm} are just terms that will be helpful for later exposition.

Finally, we focus on independent probabilistic tournaments that are close to uniformly at random. 

\begin{definition}[$\varepsilon$-Bounded Tournaments] An independent probabilistic tournament $T$ is \emph{weakly $\varepsilon$-bounded} if for all $i,j$ $p^T_{ij} \in [1/2-\varepsilon,1/2+\varepsilon]$. We refer to $\mathcal{T}^\varepsilon$ as the set of all $\varepsilon$-bounded tournaments.

An independent probabilistic tournament $T$ is \emph{strictly $\varepsilon$-bounded} if for all $i,j$ $p^T_{ij} \in \{1/2-\varepsilon,1/2+\varepsilon\}$. We refer to $\mathcal{S}^\varepsilon$ as the set of all strictly $\varepsilon$-bounded tournaments. It will be helpful to define the notation $\mathcal{T}^\varepsilon_{\leq n}$ (respectively, $\mathcal{S}^\varepsilon_{\leq n}$) as the set of all weakly- (respectively, strictly-) bounded tournaments on $\leq n$ teams.
\end{definition}

For example, every deterministic tournament is $1/2$-bounded, and the uniformly random tournament is $0$-bounded. Our main results study $\alpha_k(\mathcal{T}^\varepsilon)$ as a function of $\varepsilon$. We conclude with a brief lemma relating $\alpha_k(\mathcal{T}^\varepsilon)$ to $\alpha_k(\mathcal{S}^\varepsilon)$, as direct analysis of $\alpha_k(\mathcal{S}^\varepsilon)$ is significantly simpler than direct analysis of $\alpha_k(\mathcal{T}^\varepsilon)$.

\begin{proposition}\label{prop:convex} For all rules $r$, and all $\varepsilon, k$, $\alpha_k^r(\mathcal{T}^\varepsilon) = \alpha_k^r(\mathcal{S}^\varepsilon)$. Therefore, for all $\varepsilon, k$, $\alpha_k(\mathcal{T}^\varepsilon) = \alpha_k(\mathcal{S}^\varepsilon)$.
\end{proposition}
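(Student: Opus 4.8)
The plan is to prove the per-rule identity $\alpha_k^r(\mathcal{T}^\varepsilon) = \alpha_k^r(\mathcal{S}^\varepsilon)$ for every fixed rule $r$; the statement about $\alpha_k$ then follows immediately by taking the infimum of both sides over all Condorcet-consistent $r$. Since $\mathcal{S}^\varepsilon \subseteq \mathcal{T}^\varepsilon$, any $S$-adjacent pair $(T,T')$ witnessing a value for $\alpha_k^r(\mathcal{S}^\varepsilon)$ is also admissible for $\alpha_k^r(\mathcal{T}^\varepsilon)$, so $\alpha_k^r(\mathcal{S}^\varepsilon) \le \alpha_k^r(\mathcal{T}^\varepsilon)$ is immediate. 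All the content is in the reverse inequality, and the structural fact I would exploit is that $r_S(T)$ is \emph{multilinear} in the edge probabilities $\{p^T_{ij}\}$.

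To make this precise I would first record the following consequence of independence: for any edge $\{a,b\}$, conditioning on its outcome gives $r_S(T) = p^T_{ab}\, r_S(T^{a\to b}) + (1 - p^T_{ab})\, r_S(T^{b \to a})$, where $T^{a \to b}$ (resp.\ $T^{b \to a}$) is the tournament obtained from $T$ by fixing the match between $a$ and $b$ deterministically in favor of $a$ (resp.\ $b$) and leaving all other match probabilities unchanged. Because match outcomes are independent, $\Pr[T = D]$ factors as a product over edges, so this identity holds and shows that $r_S(T)$ is an affine function of $p^T_{ab}$ when all other probabilities are held fixed. Hence $r_S(T)$ is multilinear in the full vector of edge probabilities.

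Next I would fix a set $S$ with $|S| \le k$ and regard the manipulation gain $g := r_S(T') - r_S(T)$, over $S$-adjacent pairs $T, T'$ in the relevant class, as a function of three groups of variables: the probabilities on edges $\{i,j\} \not\subseteq S$ (shared by $T$ and $T'$ since they are $S$-adjacent), the within-$S$ probabilities of $T$, and the within-$S$ probabilities of $T'$. By the multilinearity just established, $g$ is affine in each of these variables separately: a shared variable appears affinely in both $r_S(T')$ and $r_S(T)$, hence affinely in their difference, while each within-$S$ variable appears in only one of the two terms. Every variable that is constrained by membership in the class ranges over the interval $[1/2-\varepsilon, 1/2+\varepsilon]$ for $\mathcal{T}^\varepsilon$ and over its two endpoints $\{1/2-\varepsilon, 1/2+\varepsilon\}$ for $\mathcal{S}^\varepsilon$.

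Finally I would invoke the standard extreme-point principle: a function affine in each coordinate attains its maximum over a box (a product of closed intervals) at a vertex of that box. Coordinate by coordinate, one pushes each constrained probability to whichever endpoint does not decrease $g$; after processing all of them the value has not decreased, every constrained probability lies in $\{1/2-\varepsilon, 1/2+\varepsilon\}$, and $S$-adjacency is preserved (a shared variable is moved identically in $T$ and $T'$, and within-$S$ variables are unconstrained by adjacency). The resulting pair lies in $\mathcal{S}^\varepsilon$, which yields $\alpha_k^r(\mathcal{T}^\varepsilon) \le \alpha_k^r(\mathcal{S}^\varepsilon)$ and completes the equality. (For each fixed number of teams $n$ the domain is compact, so the maxima are attained, and the bound over $\mathcal{T}^\varepsilon$ is the supremum over $n$ of the per-$n$ maxima, which match those over $\mathcal{S}^\varepsilon$.) The only real subtlety, and the single place where independence is used essentially, is the multilinearity step: it is precisely the product structure of $\Pr[T = D]$ that makes $r_S$ affine in each edge probability, and without independence $g$ would not be multilinear and the extreme-point argument would fail.
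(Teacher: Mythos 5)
Your proposal is correct, and it formalizes the paper's stated intuition (``rules are most manipulable on extreme points'') by a genuinely different mechanism than the paper's own proof. The paper proceeds by an explicit probabilistic coupling: for every edge it rounds $p^T_{ij}$ to $\tfrac12+\varepsilon$ with probability $q_{ij}=\frac{p^T_{ij}-(1/2-\varepsilon)}{2\varepsilon}$ and to $\tfrac12-\varepsilon$ otherwise, independently across edges and identically for $T$ and $T'$ on edges outside $S$ (within-$S$ edges of $T'$ are copied verbatim), producing a random pair $(U,U')$ with $U\in\mathcal{S}^\varepsilon$ and $U'$ $S$-adjacent to $U$; after verifying that sampling through $(U,U')$ is a valid way to draw $(T,T')$, linearity of expectation gives $r_S(T')-r_S(T)=\mathbb{E}[r_S(U')-r_S(U)]\le\alpha_k^r(\mathcal{S}^\varepsilon)$. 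You instead establish multilinearity of $r_S$ in the edge probabilities and run a deterministic coordinate-by-coordinate extreme-point argument, pushing each constrained probability (shared edges moved identically in both tournaments, within-$S$ edges of $T$ pushed, within-$S$ edges of $T'$ untouched) to an endpoint without decreasing the gain. Both arguments rest on the same two pillars --- the product structure of $\Pr[T=D]$ coming from independence, and consistent treatment of shared edges to preserve $S$-adjacency --- but the paper bounds an average over vertices by a maximum, while you greedily walk to a maximizing vertex. Your route avoids having to verify the validity of a two-stage sampling procedure and makes explicit the correct observation that the within-$S$ probabilities of $T'$ are unconstrained (they need not lie in $[1/2-\varepsilon,1/2+\varepsilon]$), while the paper's coupling dispatches all coordinates in one shot with no need for the affine-in-each-coordinate bookkeeping; your closing compactness remark is harmless but unnecessary, since the pushing argument bounds every pair's gain directly and the inequality survives taking suprema.
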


While the proof is deferred to Appendix~\ref{app:prelim}, the high-level outline is fairly intuitive. First, the ``Therefore,\ldots'' statement follows trivially from the first portion of the lemma. Also, it is trivial to see that $\alpha_k^r(\mathcal{T}^\varepsilon) \geq \alpha_k^r(\mathcal{S}^\varepsilon)$, as $\mathcal{S}^\varepsilon \subseteq \mathcal{T}^\varepsilon$. So the interesting step is establishing $\alpha_k^r(\mathcal{T}^\varepsilon) \leq \alpha_k^r(\mathcal{S}^\varepsilon)$. Intuitively, this follows because all (independent probabilistic) tournaments in $\mathcal{T}^\varepsilon$ can be written as convex combinations of tournaments in $\mathcal{S}^\varepsilon$, and one might expect that any particular tournament rule is most manipulable on extreme points (indeed this is true). 
With Proposition~\ref{prop:convex}, we may restrict our study to $\alpha_k(\mathcal{S}^\varepsilon)$.

\section{Recursive Tournament Rules}\label{sec:recursive}
In this section, we formalize a class of tournament rules which have a recursive aspect to them. This will help us streamline previous analysis of~\cite{SchneiderSW17}, and also easily design a new tournament rule with matching guarantees. In addition, this view will help give us a clean outline to analyze the performance of these rules on tournaments in $\mathcal{T}^\varepsilon$, rather than just in the worst case.

\begin{definition}[Elimination Rule] An \emph{elimination rule} $E$ takes as input a number $n$ of teams and selects (possibly randomly) a set $M:=E(n)$ of matches to play, with $|M| \in [1,n-1]$.

An elimination rule is \emph{matching} if $M$ is a (not necessarily perfect) matching with probability one.
\end{definition}

\begin{definition}[Recursive Tournament Rule] A \emph{recursive tournament rule} is fully defined by its elimination rule $E$. The recursive tournament rule $r^E$ takes as input a tournament $T$, samples matches $M:=E(n)$ to play, and then \emph{eliminates} any team which loses a match in $M$.\footnote{Observe that because $|M| \in [1,n-1]$ that at least one team is eliminated, but not all teams are eliminated.} Specifically, $T|_M$ denotes the induced subgraph of $T$ on teams not eliminated by the matches in $M$. The tournament then recursively executes $r^E(T|_M)$ to select a winner. As a base case, when there is only one team left, that team is the winner.
\end{definition}

We now give three examples of elimination rules, and the resulting tournament rule. Randomized Single Elimination Bracket was first studied in~\cite{SchneiderSW17}, Randomized King of the Hill was first studied in~\cite{SchvartzmanWZZ20}, and Randomized Death Match is first studied in this paper. 

\begin{definition}[Randomized Single Elimination Bracket]\label{def:rseb} For a tournament $T$ on $n$ teams, let $n':=2^{\lceil\log_2 n\rceil}$. Create $n'-n$ dummy players who all lose to the original $n$ teams. Let $M$ be matches corresponding to a uniformly random perfect matching (i.e. exactly $n'/2$ matches are played, and every team plays in exactly one match). Eliminate the losers and recurse on the remaining (non-dummy) teams. 
\end{definition}

\begin{definition}[Randomized King of the Hill]\label{def:rkoth} Pick a uniformly random team $i$, and have every team play $i$. Observe that if $i$ is a Condorcet winner, then $i$ will be the only remaining team. Otherwise, $i$ and every team it beats will be eliminated. Recurse on the remaining teams.
\end{definition}

\begin{definition}[Randomized Death Match]\label{def:rdm} Pick two uniformly random teams (without replacement) and play their match. Eliminate the loser and recurse on the remaining teams.
\end{definition}

Observe that our definition of Randomized Single Elimination Bracket (RSEB) differs semantically from that given in~\cite{SchneiderSW17}, where the $n'$ teams are uniformly permuted into $n'$ seeds, and then the resulting bracket is played (without re-randomizing at each round). Observe that the two definitions are equivalent (identically distributed), however, as our definition simply produces the seeding by first figuring out the first-round matches, then the second-round matches, etc. Our definition of Randomized King of the Hill (RKotH) is identical (semantically) to that given in~\cite{SchvartzmanWZZ20}. 

Randomized Death Match (RDM) is similar to Randomized Voting Caterpillar (RVC)~\cite{SchneiderSW17}. Like RDM, RVC picks two uniformly random teams (without replacement) and eliminates the loser. However, rather than a ``pure recursion'', RVC proceeds by picking \emph{one} uniformly random remaining team to play the previous winner. This subtle distinction causes RDM to be $2$-SNM-$1/3$ (Theorem~\ref{thm:RDMeasy}), but not RVC (\cite[Theorem 3.14]{SchneiderSW17}). Observe also that RSEB and RDM have matching elimination rules, but RKotH doesn't (this is the main technical barrier in extending our analysis to RKotH).

Finally, observe that all three rules above are \emph{anonymous}: relabeling the teams simply relabels the distribution over winners. This will play a role in our analysis.

\begin{definition}[Anonymous] A tournament rule $r$ is \emph{anonymous} if for every tournament $T$, and every permutation $\sigma$, and all $i$, $r_{\sigma(i)}(\sigma(T))=r_i(T)$.
\end{definition}

\section{Key Framework}\label{sec:main}
In this section, we propose an outline to analyze the manipulability of recursive tournament rules for independent probabilistic tournaments. Below, for notational convenience, if a team $i$ is not present in tournament $T$ (e.g. because they were eliminated in an earlier round), we abuse notation and denote by $r_i(T):=0$. Additionally, we will let $M(T)$ denote the outcome of matches $M$ for tournament $T$ (i.e. for all $(u,v) \in M$, whether $u$ or $v$ wins in tournament $T$). Note that if $T$ is probabilistic, then $M(T)$ is a random variable, even after conditioning on $M$. 

The first step in our framework simply observes that recursive tournament rules can be analyzed recursively due to linearity of expectation. A (short) proof of Lemma~\ref{lem:recurse} appears in Appendix~\ref{app:recursive}.

\begin{lemma}\label{lem:recurse} Let $r^E$ be any recursive tournament rule. Let $T, T'$ be independent probabilistic tournaments on $n > 1$ teams, and let $S$ be any subset of teams. Then:
$$r^E_S(T') - r^E_S(T) = \mathbb{E}_{M\leftarrow E(n)}[r^E_S(T'|_M) - r^E_S(T|_M)].$$
\end{lemma}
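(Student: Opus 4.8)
The plan is to prove the identity
$$r^E_S(T') - r^E_S(T) = \mathbb{E}_{M\leftarrow E(n)}[r^E_S(T'|_M) - r^E_S(T|_M)]$$
by unwinding the definition of a recursive tournament rule exactly one level and using linearity of expectation. By definition, $r^E$ first samples a match set $M \leftarrow E(n)$, plays those matches, eliminates the losers, and then recurses on the surviving subtournament. Since the final winner is produced entirely by the recursive call on the reduced tournament, I would first write, for a single fixed realization of the outcomes, the probability that a team in $S$ wins as an expectation over $M$ of the probability that the recursion on the survivors outputs a team in $S$.

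First I would fix the set of matches $M$ (the randomness from $E(n)$) and argue that, conditioned on $M$, the winning probability $r^E_S(T)$ equals the expected value of $r^E_S(T|_M)$, where the expectation is taken over the randomness in the match outcomes $M(T)$ together with the internal randomness of the recursive rule. The subtlety to handle carefully is the abuse of notation whereby $r_i(T):=0$ for a team $i$ not present in $T$: when a team in $S$ is eliminated in the current round, it simply contributes zero to $r^E_S(T|_M)$, and this is exactly consistent with the statement, since such a team can no longer win. Because $T$ is an independent probabilistic tournament, the outcomes of the matches in $M$ and the outcomes of all remaining matches (which determine $T|_M$ as a probabilistic tournament on the survivors) are independent, so conditioning on $M(T)$ factors cleanly and the recursive expectation is well-defined.

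Taking the full expectation over $M \leftarrow E(n)$ then gives $r^E_S(T) = \mathbb{E}_{M\leftarrow E(n)}[r^E_S(T|_M)]$, and the identical derivation applied to $T'$ gives $r^E_S(T') = \mathbb{E}_{M\leftarrow E(n)}[r^E_S(T'|_M)]$. Subtracting these two equalities and pulling the difference inside the single expectation over $M$ (legitimate because $E(n)$ is a fixed distribution independent of whether we are analyzing $T$ or $T'$) yields the claimed identity. The key point that makes the subtraction clean is that both tournaments use the same elimination-rule distribution $E(n)$, so the outer expectation operator is the same on both sides.

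The main obstacle, and the step deserving the most care, is justifying the one-level unrolling $r^E_S(T) = \mathbb{E}_{M\leftarrow E(n)}[r^E_S(T|_M)]$ rigorously in the probabilistic setting: one must verify that $T|_M$ is itself a well-defined independent probabilistic tournament on the survivors (so that $r^E_S(T|_M)$ denotes the correctly-interpreted expectation per Definition~\ref{def:independent}), and that independence of match outcomes guarantees that conditioning on the first-round outcomes does not distort the distribution of the remaining matches. Everything else is a direct consequence of the definition of $r^E$ and linearity of expectation, so I expect the proof to be short once this conditioning argument is stated precisely.
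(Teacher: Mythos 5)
Your proposal is correct and follows essentially the same route as the paper's proof: unroll the recursion one level, observe that $r^E_S(T) = \mathbb{E}_{M\leftarrow E(n)}[r^E_S(T|_M)]$ by the definition of $r^E$, and conclude by linearity of expectation applied to both $T$ and $T'$. One small remark: conditioned on $M$ alone, $T|_M$ is a probabilistic tournament but generally \emph{not} an independent one (its team set depends on the outcomes $M(T)$); this causes no issue because $r^E_S(T|_M)$ is interpreted as an expectation over all remaining randomness, exactly as you describe in your conditioning argument.
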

To help parse notation: recall that if $T, T'$ are not deterministic, then the notation $r_S^E(T')$ (respectively, $r_S^E(T), r_S^E(T'|_M), r_S^E(T|_M)$) is taking an expectation over $T'$ (respectively, $T, T'|_M, T|_M$), as per Definition~\ref{def:independent}.\footnote{In particular, note that $T|_M$ and $T'|_M$ are not independent probabilistic tournaments, but are still probabilistic tournaments, as they are distributions over independent probabilistic tournaments.} On the right-hand side, we are taking an expectation first over the matches $M$ which are played. Inside the expectation, the teams in $T|_M$ and $T'|_M$ are still random variables (because they depend on the \emph{outcome} of matches in $M$ in $T$, which are still random after conditioning on $M$). And after the teams are determined by $M(T),M(T')$, the tournaments $T'|_M, T|_M$ are still probabilistic tournaments.

Importantly, observe that when $S = \{u,v\}$ and $T, T'$ are $S$-adjacent, the tournaments $T|_M$ and $T'|_M$ may differ for two reasons. First, perhaps $(u,v) \in M$. In this case, perhaps $M(T) \neq M(T')$ (because $T$ and $T'$ can differ on the $(u,v)$ match), and then $T|_M,T'|_M$ may have different sets of teams. Second, perhaps $(u,v) \notin M$, implying that $M(T) = M(T')$ (because $T$ and $T'$ are identically distributed outside of the $(u,v)$ match). $T|_M$ and $T'|_M$ therefore have the same sets of teams, but it can still be that $p^T_{uv} \neq p^{T'}_{uv}$, so the tournaments $T|_M$ and $T'|_M$ can still differ due to this match (if both $u$ and $v$ are not eliminated).

The second step in our framework simply splits the recursive analysis into cases based on $M$, and the results of the matches $M(T)$. We define these cases clearly below, and then state our main framework. 

\begin{definition}[Base Case] We say that tournament $T$ is a \emph{base case} if $\alpha_S^r(T) = 0$. That is, we have hit a base case when it is not possible for $S$ to gain by manipulating tournament $T$ under rule $r$.
\end{definition}

One clear base case occurs if $S=\{u,v\}$, but $u$ is not even in $T$. Lemma~\ref{lem:anon} later identifies another base case for anonymous tournament rules (essentially: if relabeling $u$ and $v$ doesn't change $T$ except for the $(u,v)$ match, then $\{u,v\}$ cannot gain by manipulating an anonymous tournament rule).

\begin{definition}[Bad Terminal Event] $M$ is a \emph{bad terminal event} if $(u,v) \in M$. That is, a bad terminal event occurs when the $(u,v)$ match is played this round. We denote the occurence of this event by $B$.
\end{definition}

\begin{definition}[Good Terminal Event] $M, M(T)$ is a \emph{good terminal event} if $(u,v) \notin M$, and $M(T)$ is such that $T|_M$ is a base case.\footnote{Observe that because $(u,v) \notin M$, that if $T'$ is $S$-adjacent to $T$ then $M(T)$ and $M(T')$ are identically distributed.} That is, the $(u,v)$ match is not played now (so no gains from manipulation are possible), and no future gains are possible either. We denote the occurence of this event by $G$.
\end{definition}

\begin{definition}[Recursive Event] $M, M(T)$ is a \emph{recursive event} if $(u,v) \notin M$, but $M(T)$ is not such that $T|_M$ is a base case. That is, a recursive event occurs when the $(u,v)$ match is not played now, but may be played later (and manipulating it may be beneficial). We denote the occurence of this event by $R$.
\end{definition}

With these definitions in mind, we now state the theorem which drives our analysis. Observe that the bound claimed by Theorem~\ref{thm:recursive} could in principle be (very) loose, but our subsequent sections show that this framework suffices to nail down $\alpha_2^r(\mathcal{T}^\varepsilon)$ for RDM and RSEB, and also that these are the best possible guarantees for any Condorcet-consistent tournament rule.

\begin{theorem}\label{thm:recursive} Let $r^E$ be a recursive tournament rule,  $\varepsilon \in [0,1/2]$, and $S:=\{u,v\}$. Let also $b,g, c \geq 0$ (with $b+g> 0$) be such that for \emph{any} $T \in \mathcal{S}^\varepsilon$, and \emph{any} $T'$ which is $S$-adjacent to $T$:
\begin{itemize}
\item $\Pr_{M\leftarrow E(n)}[B] \leq b$.
\item $\Pr_{M \leftarrow E(n), M(T)}[G] \geq g$.
\item $\mathbb{E}_{M \leftarrow E(n)}[r_S^E(T'|_M) - r_S^E(T|_M) | B] \leq c$.
\end{itemize} 

Then: $\alpha_2^{r^E}(\mathcal{T}^\varepsilon) \leq \frac{bc}{b+g}$.
\end{theorem}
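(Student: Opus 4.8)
The plan is to first apply Proposition~\ref{prop:convex} to reduce the claim $\alpha_2^{r^E}(\mathcal{T}^\varepsilon) \le \frac{bc}{b+g}$ to the bound $\alpha_2^{r^E}(\mathcal{S}^\varepsilon) \le \frac{bc}{b+g}$, so that I may work entirely with strictly $\varepsilon$-bounded tournaments. Since $\alpha_2^{r^E}(\mathcal{S}^\varepsilon)$ is a supremum of $\alpha_S^{r^E}(T)$ over $T \in \mathcal{S}^\varepsilon$ and $|S|\le 2$, and the cases $|S|\le 1$ contribute gain $0$, it suffices to fix $S=\{u,v\}$ and prove $\alpha_S^{r^E}(T) \le \gamma := \frac{bc}{b+g}$ for every $T \in \mathcal{S}^\varepsilon$. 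I would establish this by strong induction on the number of teams $n$. Note $\gamma$ is exactly the fixed point of the worst-case one-step recursion $\gamma = bc + (1-b-g)\gamma$, which is what motivates the bound; the hypothesis $b+g>0$ is what makes $\gamma$ well-defined.

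For the induction, the base cases are those $T$ with $\alpha_S^{r^E}(T)=0$ (the base cases of the corresponding definition, e.g. $n=1$, or $u\notin T$, or an anonymity-induced base case): there is nothing to prove since $0\le\gamma$. For the inductive step, fix a non-base-case $T\in\mathcal{S}^\varepsilon$ on $n$ teams and an arbitrary $S$-adjacent $T'$, and bound $r_S^{E}(T')-r_S^E(T)$. I would apply Lemma~\ref{lem:recurse} to rewrite this as $\mathbb{E}_{M\leftarrow E(n)}[r_S^E(T'|_M)-r_S^E(T|_M)]$, then expand over the joint randomness of $M$ and the realized outcomes $M(T)$, partitioning into the three events $B$, $G$, $R$ (which by construction partition the space).

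Next I would bound the three conditional contributions. Conditioned on $B$, the third hypothesis gives $\mathbb{E}[r_S^E(T'|_M)-r_S^E(T|_M)\mid B]\le c$. On both $G$ and $R$ we have $(u,v)\notin M$, so the outcomes of the matches in $M$ are identically distributed in $T$ and $T'$ and may be coupled to be equal; hence the surviving team sets coincide and $T'|_M$ is $S$-adjacent to $T|_M$. Conditioned on $G$, $T|_M$ is a base case, so $r_S^E(T'|_M)-r_S^E(T|_M)\le\alpha_S^{r^E}(T|_M)=0$. Conditioned on $R$, the conditioned $T|_M$ is an independent strictly $\varepsilon$-bounded tournament on fewer than $n$ teams (independence and the property $p_{ij}\in\{1/2-\varepsilon,1/2+\varepsilon\}$ are preserved under restricting to survivors and conditioning on the played matches, and $|M|\ge 1$ eliminates at least one team), so the inductive hypothesis yields $r_S^E(T'|_M)-r_S^E(T|_M)\le\alpha_S^{r^E}(T|_M)\le\gamma$. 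Writing $\beta=\Pr[B]\le b$ and $\delta=\Pr[G]\ge g$ with $\Pr[R]=1-\beta-\delta$, these combine to $r_S^E(T')-r_S^E(T)\le \beta c+(1-\beta-\delta)\gamma$.

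Finally I would verify $\beta c+(1-\beta-\delta)\gamma\le\gamma$, which is equivalent to $\beta c\le\gamma(\beta+\delta)$; substituting $\gamma=\frac{bc}{b+g}$ and clearing denominators (the case $c=0$ being trivial) this reduces to $b\delta\ge\beta g$, which holds since $\beta\le b$, $\delta\ge g$, and $b,g\ge 0$ give $b\delta\ge bg\ge\beta g$. This closes the induction. I expect the main obstacle to be the bookkeeping of the inductive step rather than the final algebra: correctly expanding Lemma~\ref{lem:recurse} over the joint randomness of $M$ and $M(T)$, and justifying the coupling $M(T)=M(T')$ on $G$ and $R$ so that $T'|_M$ is genuinely $S$-adjacent to the smaller tournament $T|_M$. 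This coupling is precisely the step that lets the good and base-case events contribute nothing and turns the recursive events into a valid application of the inductive hypothesis.
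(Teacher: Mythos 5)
Your proposal is correct and follows essentially the same route as the paper's proof: reduce to $\mathcal{S}^\varepsilon$ via Proposition~\ref{prop:convex}, induct on the number of teams, expand one round via Lemma~\ref{lem:recurse}, partition into the events $B$, $G$, $R$, couple $M(T)=M(T')$ when $(u,v)\notin M$ so that $T'|_M$ is $S$-adjacent to $T|_M\in\mathcal{S}^\varepsilon_{\leq n-1}$, and finish with the same monotonicity/fixed-point algebra. The only differences are cosmetic (you invoke Proposition~\ref{prop:convex} at the start rather than the end, and phrase the final inequality as $b\delta\ge\beta g$ instead of the paper's extremal argument in $\Pr[B]$ and $\Pr[G]$).
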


\begin{proof} We prove the theorem by induction, and focus on $\mathcal{S}^\varepsilon$ first (extending to $\mathcal{T}^\varepsilon$ only at the end using Proposition~\ref{prop:convex}). As a base case, observe that when there are at most two teams remaining, no gains from manipulation are possible and therefore $\alpha_2^{r^E}(\mathcal{S}^\varepsilon_{\leq 2}) \leq 0 \leq \frac{bc}{b+g}$ as desired. 

For the inductive hypothesis, assume that $\alpha_2^{r^E}(\mathcal{S}^\varepsilon_{\leq n-1}) \leq \frac{bc}{b+g}$ and consider now any tournament $T\in \mathcal{S}^\varepsilon$ on $n$ teams, and an $S$-adjacent $T'$. We have the following chain of equalities, which essentially just breaks down $r_S^E(T') - r_S^E(T)$ based on the three events $R,B,G$ (below, $I(X)$ denotes the indicator random variable for event $X$, which is $1$ when event $X$ occurs and $0$ otherwise):

\begin{align*}
r_S^E(T') - r_S^E(T) &=\mathbb{E}_{M\leftarrow E(n)}[r^E_S(T'|_M) - r^E_S(T|_M)]\\
&=\mathbb{E}_{M \leftarrow E(n),M(T)}[(r_S^E(T'|_M) - r_S^E(T|_M)) \cdot (I(R) + I(B) + I(G))]\\
&= \Pr_{M\leftarrow E(n), M(T)}[R]\cdot \mathbb{E}_{M \leftarrow E(n), M(T)}[r_S^E(T'|_M) - r_S^E(T|_M)|R]\\
&\quad +\Pr_{M\leftarrow E(n)}[B]\cdot \mathbb{E}_{M \leftarrow E(n)}[r_S^E(T'|_M) - r_S^E(T|_M)|B]\\
&\quad + \Pr_{M\leftarrow E(n), M(T)}[G]\cdot \mathbb{E}_{M \leftarrow E(n), M(T)}[r_S^E(T'|_M) - r_S^E(T|_M)|G]
\end{align*}
The first line is just restating Lemma~\ref{lem:recurse}. The second line simply observes that exactly one of the events $R, B, G$ occur, and also uses linearity of expectation to take an expectation also over $M(T)$. The third line just observes that $\mathbb{E}[Y \cdot I(X)] = \Pr[X] \cdot \mathbb{X}[Y | X]$ for any random variable $Y$ and event $X$, breaks the sum into three parts (again using linearity of expectation), and observes that the event $B$ is determined entirely by $M$ and is independent of $M(T)$. 

We now want to analyze the three terms separately. First, observe that by bullet three of the hypotheses: 

\begin{equation}\label{eq:one}
\mathbb{E}_{M \leftarrow E(n)}[r_S^E(T'|_M) - r_S^E(T|_M)|B] \leq c.
\end{equation}

Next, observe that in either a good terminal event or a recursive event, $(u,v) \notin M$. Because $T, T'$ are $S$-adjacent, this means that $M(T),M(T')$ are identically distributed (and can therefore be coupled so that $M(T) = M(T')$ with probability one), so the teams in $T|_M$ and $T'|_M$ are therefore the same. Finally, because $T$ is an \emph{independent} probabilistic tournament, and at least one team that participates in every match in $M$ is eliminated, $p^{T|_M}_{ij}= p^T_{ij}$ for all teams $i,j$ which are present in $T|_M$ (and also $p^{T'|_M}_{ij}= p^{T'}_{ij}$). This means that $T|_M \in \mathcal{S}^\varepsilon_{\leq n-1}$, and also that $T'|_M,T|_M$ are $S$-adjacent.

In a good terminal event, because $\alpha_2^{r^E}(T|_M) = 0$ by definition, we must therefore have:

\begin{equation}\label{eq:two}
\mathbb{E}_{M \leftarrow E(n), M(T)}[r_S^E(T'|_M) - r_S^E(T|_M)|G] \leq 0.
\end{equation}

In a recursive event, we instead have:

\begin{equation}\label{eq:three}
\mathbb{E}_{M \leftarrow E(n), M(T)}[r_S^E(T'|_M) - r_S^E(T|_M)|R]
\leq \alpha_2^{r^E}(\mathcal{S}_{\leq n-1}).
\end{equation}

Finally, we may now use our inductive hypothesis and Equations~\eqref{eq:one},~\eqref{eq:two},~\eqref{eq:three} to conclude:

\begin{align*}
r_S^E(T') - r_S^E(T) &= \Pr_{M\leftarrow E(n), M(T)}[R]\cdot \mathbb{E}_{M \leftarrow E(n), M(T)}[r_S^E(T'|_M) - r_S^E(T|_M)|R]\\
&\quad +\Pr_{M\leftarrow E(n)}[B]\cdot \mathbb{E}_{M \leftarrow E(n)}[r_S^E(T'|_M) - r_S^E(T|_M)|B]\\
&\quad + \Pr_{M\leftarrow E(n), M(T)}[G]\cdot \mathbb{E}_{M \leftarrow E(n), M(T)}[r_S^E(T'|_M) - r_S^E(T|_M)|G]\\
& \leq \Pr_{M\leftarrow E(n), M(T)}[R]\cdot \frac{bc}{b+g}+\Pr_{M\leftarrow E(n)}[B]\cdot c+ \Pr_{M\leftarrow E(n), M(T)}[G]\cdot 0\\
&= \frac{bc}{b+g}+\Pr_{M\leftarrow E(n)}[B]\cdot \left(c-\frac{bc}{b+g}\right)+ \Pr_{M\leftarrow E(n), M(T)}[G]\cdot \left(0 - \frac{bc}{b+g}\right)\\
& \leq (1-b-g)\cdot \frac{bc}{b+g}+b\cdot c+ g\cdot 0\\
&= \frac{bc}{b+g}
\end{align*}

%\kimnote{Is Pr[R] = 1 - Pr[B] - Pr[G]? But Pr[G] $\geq g$ and Pr[B] $\leq b$, so it's not necessarily true that Pr[R] $\leq 1 - b - g$ (for example if $b$ = Pr[B] and $g <$ Pr[G])? But maybe I am missing something?} \mattnote{Did the extra line help?}

The first inequality just combines the work of Equations~\eqref{eq:one},~\eqref{eq:two},~\eqref{eq:three}, and upper bounds the expected gains in all three cases using either the inductive hypothesis ($R$), direct hypothesis ($B$), or definition of good terminal event ($G$). The final inequality observes that $c \geq \frac{bc}{b+g}\geq 0$, so the bound is maximized when $B$ occurs as often as possible, while $G$ occurs as little as possible (consistent with the hypotheses).

We have now shown that for any $T \in \mathcal{S}^\varepsilon$, and any $T'$ which is $S$-adjacent, that $r_S^E(T') - r_S^E(T) \leq \frac{bc}{b+g}$. This establishes that $\alpha_2^{r^E}(\mathcal{S}^\varepsilon) \leq \frac{bc}{b+g}$. Proposition~\ref{prop:convex} extends this to $\mathcal{T}^\varepsilon$ as well.
\end{proof}
%\kimnote{(instead of ``tournaments'', ``tournament rules''?)} \mattnote{Agree!}
Theorem~\ref{thm:recursive} is our main framework for analysis. The remainder of this paper now computes the bounds required for the three bullets for the two tournament rules of interest as a function of $\varepsilon$, and substitutes to obtain tight bounds. Finally, it is worth briefly noting that the definition of $\alpha_2^r(\mathcal{T}^\varepsilon)$ semantically assumes that teams must decide how to manipulate the outcome of their match \emph{in advance}, before any matches are played. Still, any analysis that follows from Theorem~\ref{thm:recursive} applies \emph{even to manipulations which are decided upon as the match is played}. This is because the bound in bullet three of Theorem~\ref{thm:recursive} must hold over \emph{all} $S$-adjacent $T'$, not just one which was decided in advance. This claim is not central to our main results, and making it formal would be notationally cumbersome, so we provide only this brief discussion.

\section{Warmup: Rederiving Bounds for Deterministic Tournaments}\label{sec:deterministic}
In this section, we rederive the main result of~\cite{SchneiderSW17} (RSEB is $2$-SNM-$1/3$), and analyze a novel tournament rule (RDM is $2$-SNM-$1/3$) using a simple application of Theorem~\ref{thm:recursive}. This will require one further lemma about anonymous tournament rules to find additional base cases. The proof is in Appendix~\ref{app:warmup}.

\begin{lemma}\label{lem:anon} Let $r$ be any anonymous tournament rule, $S:=\{u,v\}$, and $T, T'$ be independent probabilistic tournaments which are $S$-adjacent \emph{and satisfy $p^T_{uw} = p^T_{vw}$ for all $w \notin \{u,v\}$}. Then $r_S(T) = r_S(T')$.
\end{lemma}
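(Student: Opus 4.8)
The plan is to fix the two manipulating teams as $S=\{u,v\}$, write $p:=p^T_{uv}$ and $p':=p^{T'}_{uv}$, and reduce the claim to a clean symmetry statement about two auxiliary tournaments in which the $(u,v)$ match is decided deterministically. Since $T$ and $T'$ are $S$-adjacent, they agree on every match except the one between $u$ and $v$; in particular the joint distribution of all \emph{other} matches is identical under $T$ and $T'$. Let $T^+$ denote the independent probabilistic tournament that agrees with $T$ (equivalently $T'$) on all matches except that $u$ beats $v$ with probability $1$, and let $T^-$ be the analogous tournament in which $v$ beats $u$ with probability $1$. Because $T$ is independent, conditioning on the outcome of the $(u,v)$ match and using linearity of expectation gives the affine decompositions
\begin{align*}
r_S(T) &= p\cdot r_S(T^+) + (1-p)\cdot r_S(T^-),\\
r_S(T') &= p'\cdot r_S(T^+) + (1-p')\cdot r_S(T^-).
\end{align*}
Here $r_S(T^+)$ and $r_S(T^-)$ are the \emph{same} in both lines precisely because $T$ and $T'$ share the distribution of all non-$(u,v)$ matches. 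Consequently the entire lemma reduces to showing $r_S(T^+) = r_S(T^-)$, since then both right-hand sides collapse to this common value.

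The key step is therefore to prove $r_S(T^+) = r_S(T^-)$, and this is where the hypothesis $p^T_{uw}=p^T_{vw}$ (for all $w \notin \{u,v\}$) and anonymity enter. I would take $\sigma$ to be the transposition swapping $u$ and $v$ and fixing every other team, and argue that $\sigma(T^+) = T^-$ as independent probabilistic tournaments, by checking match probabilities one type at a time: the $(u,v)$ match flips from ``$u$ beats $v$'' to ``$v$ beats $u$'' under $\sigma$; each $u$--$w$ match of $\sigma(T^+)$ inherits the probability of the $v$--$w$ match of $T^+$, which equals $p^T_{vw}=p^T_{uw}$ by the hypothesis (and symmetrically for $v$--$w$ matches); and matches among teams outside $\{u,v\}$ are untouched. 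This is exactly the point at which the symmetry assumption is needed — without it, $\sigma(T^+)$ and $T^-$ would differ on the matches against outside teams.

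Having established $\sigma(T^+)=T^-$, I would invoke anonymity. Anonymity is stated for deterministic tournaments, but extends to probabilistic tournaments by linearity (averaging over draws), so $r_{\sigma(i)}(\sigma(\hat T)) = r_i(\hat T)$ holds with $\hat T = T^+$. Summing this identity over $i \in S$ and using that $\sigma(S) = \{\sigma(u),\sigma(v)\} = \{v,u\} = S$ yields $r_S(\sigma(T^+)) = r_S(T^+)$. Combining with $\sigma(T^+)=T^-$ gives $r_S(T^-) = r_S(T^+)$, which closes the argument and forces $r_S(T)=r_S(T')$.

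I expect the main obstacle to be purely bookkeeping in the key step: verifying carefully that $\sigma(T^+)$ and $T^-$ have \emph{identical} match probabilities across all three categories of matches, and correctly tracking that anonymity is applied with $\sigma$ preserving $S$ as a set (so that the \emph{aggregate} quantity $r_S$ is invariant even though the individual winning probabilities of $u$ and $v$ are exchanged). A minor but worth-stating point is the justification that anonymity, defined on deterministic tournaments, lifts to probabilistic ones, which follows immediately from the convention $r_i(T)=\mathbb{E}[r_i(T)]$ and linearity.
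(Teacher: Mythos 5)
Your proof is correct, but it takes a genuinely different route from the paper's. You condition on the outcome of the $(u,v)$ match: by independence, $r_S(T) = p\,r_S(T^+) + (1-p)\,r_S(T^-)$ and $r_S(T') = p'\,r_S(T^+) + (1-p')\,r_S(T^-)$ with the \emph{same} auxiliary tournaments $T^+,T^-$ (since $S$-adjacency means $T,T'$ agree off the $(u,v)$ match), and then you reduce everything to the distributional identity $\sigma(T^+) = T^-$, which is exactly where the hypothesis $p^T_{uw}=p^T_{vw}$ enters; anonymity (lifted to probabilistic tournaments by linearity) finishes it. The paper instead builds an explicit coupling: it jointly samples deterministic tournaments from $T$ and $T'$ using shared uniform random variables ($q_{w1},q_{w2}$ i.i.d.\ for the $u$--$w$ and $v$--$w$ matches, $q_{uv}$ for the $(u,v)$ match), arranged so that with probability one the draw from $T'$ equals either the draw from $T$ or its image under $\sigma$; the symmetry hypothesis is used to verify that this coupling is a valid sampling of $T'$ (swapping labels preserves the law because $q_{w1},q_{w2}$ are i.i.d.), and anonymity is then applied only at the level of deterministic tournaments. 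The two arguments hinge on the same symmetry insight, but your decomposition avoids the bookkeeping of checking a joint sampling procedure and isolates the symmetry as a clean statement about two fixed distributions, at the small cost of needing the (trivial) lifting of anonymity to probabilistic tournaments; the paper's coupling avoids that lifting and works pointwise on deterministic draws. Both are complete and rigorous.
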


We now prove that RDM is $2$-SNM-$1/3$. Recall that $\alpha_2(\mathcal{S}^{1/2}) = 1/3$, so this is the best possible.

\begin{theorem}\label{thm:RDMeasy}RDM is $2$-SNM-$1/3$. Or in our language, $\alpha_2^{\text{RDM}}(\mathcal{T}^{1/2}) = 1/3$.
\end{theorem}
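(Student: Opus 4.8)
The plan is to apply Theorem~\ref{thm:recursive} directly with $\varepsilon = 1/2$ (so that $\mathcal{S}^{1/2}$ is the set of deterministic tournaments and $\mathcal{T}^{1/2}$ all tournaments). Since RDM has a matching elimination rule that plays exactly one match per round, I need to compute the three parameters $b$, $g$, $c$ governing the bad terminal event $B$ (the $(u,v)$ match is the one played), the good terminal event $G$, and the conditional gain $c$ in a bad terminal event. With $n$ teams, the single played match is a uniformly random pair, so $\Pr[B] = 1/\binom{n}{2}$, which is maximized at $n=3$ giving $b = 1/3$; this is the value I expect to use. For the conditional bound $c$ in bullet three: conditioned on the $(u,v)$ match being played, after manipulation the winner of that match (say $u$ or $v$) continues, so the gain $r_S^{\text{RDM}}(T'|_M) - r_S^{\text{RDM}}(T|_M)$ is at most $1$ trivially, giving $c \le 1$. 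Substituting $b = 1/3$, $c = 1$ into $\frac{bc}{b+g}$ and optimizing, I want the upper bound to come out to exactly $1/3$, so I need $g$ to vanish in the worst case — i.e. I should be able to take $g = 0$ (the hypothesis requires only $b + g > 0$, which holds since $b = 1/3$). Then $\frac{bc}{b+g} = \frac{(1/3)(1)}{1/3} = 1/3$, giving $\alpha_2^{\text{RDM}}(\mathcal{T}^{1/2}) \le 1/3$.

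First I would verify the bound on $b$ carefully. RDM plays one uniformly random match among $\binom{n}{2}$ pairs, so $\Pr[B] = 1/\binom{n}{2} \le 1/3$, with equality at $n = 3$; hence $b = 1/3$ is a valid uniform bound over all $n$. I would note that the theorem requires $b$ to be an upper bound valid for \emph{all} $n$, and $1/3$ works. Next I would argue $c \le 1$ trivially since any probability difference is bounded by $1$. The subtle point is choosing $g$: to get exactly $1/3$ I need the framework to permit $g = 0$, and I should double-check this is consistent with requiring the good-terminal-event probability to be at least $g$ — taking $g = 0$ is vacuously a valid lower bound. So the clean choice is $(b, g, c) = (1/3, 0, 1)$, yielding the upper bound $1/3$.

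The remaining work is the matching lower bound $\alpha_2^{\text{RDM}}(\mathcal{T}^{1/2}) \ge 1/3$, which does not follow from Theorem~\ref{thm:recursive} (that theorem only gives upper bounds). Here I would invoke the known fact stated in the excerpt that $\alpha_2(\mathcal{S}^{1/2}) = 1/3$ is the optimal worst-case guarantee among Condorcet-consistent rules~\cite{SchneiderSW17,SchvartzmanWZZ20}; since RDM is Condorcet-consistent (an undefeated team never loses a death match and so survives every round to win with probability $1$), we automatically have $\alpha_2^{\text{RDM}}(\mathcal{T}^{1/2}) \ge \alpha_2(\mathcal{T}^{1/2}) = 1/3$. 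Alternatively, and more self-containedly, I would exhibit a specific $3$-team deterministic tournament on $\{u, v, w\}$ together with an $S$-adjacent manipulation that improves $r_S^{\text{RDM}}$ by exactly $1/3$: take the cyclic tournament where $u$ beats $v$, $v$ beats $w$, $w$ beats $u$, compute $r_S^{\text{RDM}}$, then flip the $(u,v)$ edge and recompute, checking the gain is $1/3$.

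The main obstacle I anticipate is not any single hard calculation but rather confirming that the crude choice $c = 1$ together with $g = 0$ is \emph{tight enough} to land on $1/3$ rather than something larger — in particular, ensuring that $b = 1/3$ (achieved at $n=3$) is genuinely the worst case and that no larger $b$ is forced by some subtlety in how matches are counted. A secondary concern is making the lower bound argument clean: I should pick the three-team example carefully so that the Condorcet-consistency of RDM (which pins down the behavior when a dominant team exists) interacts correctly with the cyclic case (where RDM's behavior on a $3$-cycle must be computed explicitly as a uniform choice among the three possible first matches). Verifying Condorcet-consistency of RDM is the one structural fact I must not skip, since Theorem~\ref{thm:recursive} presupposes nothing about fairness and the lower bound relies on it.
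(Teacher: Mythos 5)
There is a fatal arithmetic error at the center of your plan, and it hides the fact that you have skipped the actual content of the proof. With $b = 1/3$, $g = 0$, $c = 1$, Theorem~\ref{thm:recursive} gives $\frac{bc}{b+g} = \frac{(1/3)\cdot 1}{1/3 + 0} = 1$, not $1/3$: setting $g = 0$ always collapses the bound to $\frac{bc}{b+g} = c$, which is vacuous. The framework only beats the trivial bound $c$ because good terminal events are guaranteed to occur; to land on $c/3$ you need $g \geq 2b$, i.e., good terminal events must be at least \emph{twice} as likely as bad ones. Your stated intuition that ``$g$ should vanish in the worst case'' is exactly backwards.

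Establishing $g \geq 2b$ is the heart of the paper's proof, and it is entirely missing from your proposal. The paper takes $b = 1/\binom{n}{2}$ and $c = 1$ as you do, but the parameters are taken per $n$, with only the ratio $bc/(b+g) = 1/3$ uniform over $n$; your instinct to make $b$ a single constant valid for all $n$ would force $g$ down to $\inf_n \Pr[G] = 0$ (since $\Pr[G]$ scales like $1/\binom{n}{2}$), and then the theorem yields nothing. The paper then proves $\Pr[G] \geq 2/\binom{n}{2}$ by a case analysis on $\ell_u$, $\ell_v$, the numbers of teams beating $u$ but not $v$ and vice versa: if $\ell_u + \ell_v = 0$, Lemma~\ref{lem:anon} (anonymity) makes $T$ itself a base case; if $\ell_u + \ell_v \geq 2$, each of the at least two matches in which $u$ or $v$ meets a team that beats them is a good terminal event; and if $\ell_u = 1, \ell_v = 0$ with witness $w$, then $w$ playing \emph{either} $u$ or $v$ yields a good terminal event (either $u$ is eliminated, or $w$ is eliminated and Lemma~\ref{lem:anon} applies to what remains). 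None of this appears in your write-up, so the upper bound --- the only nontrivial direction --- is unproven. Your lower-bound paragraph (Condorcet-consistency of RDM plus the known lower bound for Condorcet-consistent rules, or the explicit $3$-cycle computation) is fine and matches how the paper handles that direction.
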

\begin{proof}
Recall that we need to lower bound the probability of a good terminal event, upper bound the probability of a bad terminal event, and upper bound the gains from manipulation in case of a bad terminal event. For the deterministic case, the latter bound is particularly simple, and we will just observe that clearly $\mathbb{E}_{M \leftarrow E(n)}[r_S^E(T'|_M) - r_S^E(T|_M) | B] \leq 1$. This is simply because the maximum probability that any coalition can win in any tournament is $1$, and the minimum is $0$. So we have established that RDM (and in fact any tournament rule) satisfies $c=1$.

To bound the probability of a bad terminal event, observe that a bad terminal event occurs only when $u$ plays $v$ in this round, which happens with probability exactly $1/\binom{n}{2}$. So RDM satisfies $b = 1/\binom{n}{2}$.

For the good terminal events, we claim that $\Pr[G] \geq 2/\binom{n}{2}$. First, observe that once we show this, we can plug into Theorem~\ref{thm:recursive} and conclude $\alpha_2^{\text{RDM}}(\mathcal{T}^{1/2}) \leq \frac{1\cdot 1/\binom{n}{2}}{1/\binom{n}{2}+2/\binom{n}{2}} = 1/3$. 

To see this bound, let $\ell_u$ denote the number of teams which beat $u$ but not $v$, and $\ell_v$ denote the number of teams which beat $v$ but not $u$. Without loss of generality let $\ell_u \geq \ell_v$. 

If $\ell_u +\ell_v= 0$, then Lemma~\ref{lem:anon} already establishes that $u$ and $v$ can gain nothing by manipulating. If $\ell_u +\ell_v \geq 2$, then whenever $u$ or $v$ play a team which beat them, we have a good terminal event (because $u$ or $v$ is already eliminated before having ever played the $(u,v)$ match, resulting in a base case). This happens with probability at least $2/\binom{n}{2}$, as desired.

If $\ell_u + \ell_v = 1$, then $\ell_u = 1, \ell_v = 0$. Let $w$ be the unique team which beats $u$ but not $v$. We claim that if $w$ plays \emph{either} $u$ or $v$ that we are in a good terminal event. Indeed, if $w$ plays $u$, then $u$ is eliminated having never played the $(u,v)$ match. If instead $w$ plays $v$, then $w$ is eliminated, but now there are no remaining teams which beat $u$ but not $v$ (or vice versa) and Lemma~\ref{lem:anon} asserts that there are no further gains from manipulation. The probability that $w$ plays $u$ or $v$ is $2/\binom{n}{2}$, as desired.

This handles all possible cases, and establishes that $g \geq 2/\binom{n}{2}$ in all cases. Plugging into Theorem~\ref{thm:recursive} as described above completes the proof.
\end{proof}

The analysis of RSEB is extremely similar to RDM, and requires only slightly more calculations to bound the probability of a good terminal event. This proof structure is fairly different than the original analysis in~\cite{SchneiderSW17}, and highlights the similarities to other recursive tournament rules. A complete proof of Theorem~\ref{thm:RSEBeasy} appears in Appendix~\ref{app:warmup}.

\begin{theorem}[\cite{SchneiderSW17}]\label{thm:RSEBeasy} RSEB is $2$-SNM-$1/3$. Or in our language, $\alpha_2^{\text{RSEB}}(\mathcal{T}^{1/2}) = 1/3$.
\end{theorem}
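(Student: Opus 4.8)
The plan is to apply the framework of Theorem~\ref{thm:recursive} in exactly the same way as the RDM proof (Theorem~\ref{thm:RDMeasy}), since RSEB also has a matching elimination rule; only the computation of the good-event probability changes. First, in the deterministic setting $\mathcal{S}^{1/2}$ the gain in a bad terminal event is at most the maximum possible change in any coalition's winning probability, so $c=1$ works for RSEB (indeed for any rule). Next I would compute $b$: a bad terminal event is exactly the event that $u$ and $v$ are matched in the uniformly random perfect matching on the $n':=2^{\lceil \log_2 n\rceil}$ teams, which happens with probability $\frac{1}{n'-1}$, so $b=\frac{1}{n'-1}$. Plugging into Theorem~\ref{thm:recursive} with $c=1$, it then suffices to show $\Pr[G] \ge \frac{2}{n'-1}$, since then $g=\frac{2}{n'-1}=2b$ and $\frac{bc}{b+g}=\frac{b}{3b}=\frac13$ (the factor $n'-1$ cancels, just as $\binom{n}{2}$ cancels for RDM).

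The heart of the argument is the lower bound $\Pr[G]\ge \frac{2}{n'-1}$, which I would prove by the same case analysis as RDM on $\ell_u$ (teams beating $u$ but not $v$) and $\ell_v$ (teams beating $v$ but not $u$), taking WLOG $\ell_u \ge \ell_v$; note dummies beat neither $u$ nor $v$, so they do not enter these counts, while any team beating $u$ or $v$ is necessarily real. If $\ell_u + \ell_v = 0$, then $p^T_{uw}=p^T_{vw}$ for every other team $w$ (dummies included), so Lemma~\ref{lem:anon} makes $T$ itself a base case and there is nothing to prove. If $\ell_u \ge 2$, then the event ``$u$ is matched with one of its $\ell_u \ge 2$ exclusive beaters'' eliminates $u$ before it ever meets $v$, a good terminal event; since $u$ has a single partner these events are disjoint, giving probability $\frac{\ell_u}{n'-1}\ge \frac{2}{n'-1}$.

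The main obstacle is the remaining case, which (given $\ell_u \ge \ell_v$) reduces to $\ell_u=\ell_v=1$ or $\ell_u+\ell_v=1$, and here the RDM argument does not transfer verbatim: in RDM a single match is played, so ``$u$ eliminated'' and ``$v$ eliminated'' are disjoint, whereas in RSEB a whole matching is drawn, so these two good events overlap and a naive sum double-subtracts. For $\ell_u=\ell_v=1$, writing $w_u,w_v$ for the two distinguishing teams, I would use three good events: $E_1=\{u \text{ matched } w_u\}$ and $E_2=\{v \text{ matched } w_v\}$ (each eliminating $u$ or $v$), together with the ``swap'' $E_3=\{u \text{ matched } w_v \text{ and } v \text{ matched } w_u\}$ --- in $E_3$ both $u,v$ survive but both distinguishers are eliminated, so $T|_M$ satisfies the hypothesis of Lemma~\ref{lem:anon} and is a base case. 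Since $E_3$ is disjoint from $E_1\cup E_2$ while $\Pr[E_3]=\Pr[E_1\cap E_2]=\frac{1}{(n'-1)(n'-3)}$, inclusion--exclusion gives $\Pr[E_1\cup E_2\cup E_3]=\frac{2}{n'-1}$ exactly, the swap event precisely compensating the overlap. For $\ell_u+\ell_v=1$ (say $\ell_u=1$ with unique distinguisher $w$), matching $w$ with $u$ eliminates $u$, and matching $w$ with $v$ eliminates $w$ after which Lemma~\ref{lem:anon} applies; these are disjoint and sum to $\frac{2}{n'-1}$, mirroring the RDM edge case.

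Finally I would assemble the bound: Theorem~\ref{thm:recursive} yields $\alpha_2^{\text{RSEB}}(\mathcal{T}^{1/2}) \le \frac{bc}{b+g}=\frac13$. For the matching lower bound, RSEB is Condorcet-consistent (an undefeated team wins every match it is assigned and hence the bracket), so $\alpha_2^{\text{RSEB}}(\mathcal{T}^{1/2}) \ge \alpha_2(\mathcal{T}^{1/2}) = \alpha_2(\mathcal{S}^{1/2}) = \frac13$ by Proposition~\ref{prop:convex} and the known worst-case optimum, giving the claimed equality. The only genuine subtlety beyond bookkeeping is the overlap in the $\ell_u=\ell_v=1$ case; everything else is a routine translation of the RDM proof to perfect-matching probabilities.
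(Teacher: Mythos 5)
Your proposal is correct and follows essentially the same route as the paper's proof: the same instantiation $c=1$, $b=1/(n'-1)$, the same case analysis on $\ell_u,\ell_v$ via Lemma~\ref{lem:anon}, and the same resolution of the $\ell_u=\ell_v=1$ overlap, where your swap event $E_3$ (probability $\tfrac{1}{(n'-1)(n'-3)}$) exactly compensates the inclusion--exclusion loss, just as in the paper. The only addition is your explicit remark on the matching lower bound via Condorcet-consistency, which the paper treats as known from prior work.
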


We wrap up our warmup by highlighting key points of the analysis which will be relevant for our main results. First, observe that our analysis of both rules succeeded by simply upper-bounding $c$ by $1$. Improving this as a function of $\varepsilon$ is the biggest technical difference between our warmup and main results. 

Second, observe that our analysis required Lemma~\ref{lem:anon} in case $\ell_u < 2$. In particular, we needed good terminal events \emph{even when both $u$ and $v$ were not eliminated} (and this need continues in main results).
\section{Optimal Bounds for Independent Probabilistic Tournaments}\label{sec:epsilon}
Before analyzing our two tournament rules, we extend the simple 3-team lower bound of~\cite{SchneiderSW17} for $\alpha_2(\mathcal{T}^{1/2})$ to $\alpha_2(\mathcal{T}^\varepsilon)$. The proof is in Appendix~\ref{app:epsilon}.

\begin{lemma}\label{lem:lb}
$\alpha_2(\mathcal{T}^\varepsilon) \geq \tfrac{1}{3}\varepsilon + \tfrac{2}{3}\varepsilon^2$.
\end{lemma}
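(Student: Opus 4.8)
The plan is to exhibit, for every Condorcet-consistent rule $r$, a strictly $\varepsilon$-bounded tournament on which some pair of teams can manipulate by at least $\tfrac13\varepsilon+\tfrac23\varepsilon^2$. Write $h:=1/2+\varepsilon$ and $\ell:=1/2-\varepsilon$, and note that the target equals $\tfrac{2\varepsilon h}{3}$ since $2\varepsilon h=\varepsilon+2\varepsilon^2$. By Proposition~\ref{prop:convex} it suffices to work inside $\mathcal S^\varepsilon$, where every match is a biased coin landing on $h$ or $\ell$ and a manipulation by $S=\{u,v\}$ just resets the single bias $p_{uv}$ to its other allowed value. Following the $\varepsilon=1/2$ argument of~\cite{SchneiderSW17}, I would fix one \emph{symmetric} base tournament $T$, designate a triple of teams yielding three coalitions, and average: since $\max_S \alpha^r_S(T)\ge \tfrac13\sum_S\alpha^r_S(T)$, it is enough to lower bound the sum of the three manipulation gains by $2\varepsilon h$, uniformly over all Condorcet-consistent $r$.

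The engine is a coupling identity. Fix $S=\{u,v\}$ and couple $T$ with its manipulation $T'$ on all matches other than $(u,v)$; conditioning on the joint outcome $\omega$ of those other matches, $T$ and $T'$ differ only in the $(u,v)$ coin, so linearity gives
\[
r_S(T') - r_S(T) \;=\; (p'_{uv}-p_{uv})\cdot \mathbb{E}_\omega\big[r_S(D^\omega_{u>v}) - r_S(D^\omega_{v>u})\big],
\]
where $D^\omega_{u>v}$ (resp.\ $D^\omega_{v>u}$) is the deterministic tournament obtained by playing the non-$(u,v)$ matches out as $\omega$ and letting $u$ beat (resp.\ lose to) $v$. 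Since $|p'_{uv}-p_{uv}|=2\varepsilon$, each probabilistic gain is exactly $2\varepsilon$ times an expected \emph{deterministic} manipulation gain $\Delta_S(\omega):=r_S(D^\omega_{u>v})-r_S(D^\omega_{v>u})$. This reduces the whole question to controlling the deterministic manipulability studied in~\cite{SchneiderSW17}, now weighted by the base distribution on $\omega$, and the only freedom left to the adversary $r$ is its values $r_i(D)$ on the Condorcet-winner-free $D$'s (all other values are pinned by Condorcet-consistency).

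Summing over the three coalitions, I would exploit symmetry of the base so that, for each Condorcet-winner-free tournament $D$, the coefficient of $r_i(D)$ in $\sum_S\big(r_S(T')-r_S(T)\big)$ is the \emph{same} for every team $i$ present in $D$; since $\sum_i r_i(D)=1$, every such contribution then collapses to a rule-independent constant, and $\sum_S\alpha^r_S(T)$ becomes a fixed number I can compute and compare to $2\varepsilon h$. Concretely, I would split each gain into a rule-independent part — the change in the probability that a Condorcet winner lands in $S$, which the coupling makes transparent — plus the cyclic contributions that the averaging has just neutralized. The remaining task is to design a symmetric base whose non-coalition biases make the induced tournament Condorcet-winner-free with the right frequency, so that the fixed sum reaches $2\varepsilon h$.

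The main obstacle is precisely this design, and it forces the construction beyond three teams. On a bare probabilistic three-cycle the averaging is already exact but too weak: each coalition's rule-independent part is the change $(h-\ell)^2=4\varepsilon^2$ in the Condorcet-winner probability, the symmetric cyclic coefficients are all equal and contribute a fixed $-8\varepsilon^2$, and the three gains sum to exactly $4\varepsilon^2$, giving only $\max_S\alpha^r_S=\tfrac{4\varepsilon^2}{3}$, which is strictly below the target for $\varepsilon<1/2$ (the deficit is $\tfrac{2\varepsilon\ell}{3}$). Extracting the missing linear-in-$\varepsilon$ term requires a larger cyclically symmetric base in which the Condorcet-winner-free realizations carry nested near-Condorcet structure, so that the symmetric cyclic contributions are forced to be far less negative while the Condorcet-winner part stays large. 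The crux is to choose this base and verify, via the equal-coefficient (symmetry) reduction above, that the rule-independent summed gain equals exactly $2\varepsilon h$ against \emph{every} Condorcet-consistent rule — a finite, if delicate, optimization that the coupling identity and coalition-averaging reduce to an estimate on the rule's values on Condorcet-winner-free tournaments.
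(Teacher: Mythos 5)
Your proposal contains a fatal misreading of the manipulation model, and it occurs in your very first step: you restrict the coalition to ``reset the single bias $p_{uv}$ to its other allowed value,'' i.e., you only permit manipulated tournaments $T'$ that remain in $\mathcal{S}^\varepsilon$. That is not what Definition~\ref{def:snm} allows. $S$-adjacency constrains only the matches \emph{not} contained in $S$; the pair may set $p^{T'}_{uv}$ to \emph{any} value in $[0,1]$, in particular $p^{T'}_{uv}=0$ (a full throw). The paper's proof uses exactly the probabilistic $3$-cycle you analyzed and then discarded, but with the full throw: in your own (correct) coupling identity the prefactor becomes $|p^{T'}_{uv}-p^T_{uv}|=\tfrac{1}{2}+\varepsilon$ rather than $2\varepsilon$, and your own three-coalition averaging then yields a summed gain of $(\tfrac{1}{2}+\varepsilon)\cdot(h^2-\ell^2)=2\varepsilon(\tfrac{1}{2}+\varepsilon)$, so some coalition gains at least $\tfrac{1}{3}\varepsilon+\tfrac{2}{3}\varepsilon^2$. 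In other words, you already had every ingredient of the paper's argument; the flip-only restriction is precisely what destroys the linear-in-$\varepsilon$ term (the factor $\tfrac{1}{2}+\varepsilon$ degrades to $2\varepsilon$), which is why you were left chasing a deficit of $\tfrac{2\varepsilon\ell}{3}$.

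Worse, the program you propose to close that deficit --- a larger cyclically symmetric base on which flip-only manipulations sum to $2\varepsilon h$ --- provably cannot succeed, so this is not a repairable detour. Rerun the paper's upper-bound machinery for RDM under your restriction: in the proof of Corollary~\ref{cor:anonrecurse}, the coupling of $T$ and $T'$ differs only on the $(u,v)$ outcome, which under a flip happens with probability $|p^{T'}_{uv}-p^T_{uv}|=2\varepsilon$ instead of at most $\tfrac{1}{2}+\varepsilon$, and conditioned on differing the gauntlet bound still gives at most $2\varepsilon$. So for flip-only manipulations one may take $c=4\varepsilon^2$ in Theorem~\ref{thm:recursive} (the class of flip-only adjacent pairs is closed under the recursion, since a recursive event preserves all match probabilities), together with $b=1/\binom{n}{2}$ and $g\geq 2/\binom{n}{2}$ as in Theorem~\ref{thm:RDMmain}. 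Hence \emph{every} flip-only manipulation of \emph{every} tournament in $\mathcal{S}^\varepsilon$ gains at most $\tfrac{4\varepsilon^2}{3}$ against RDM, which is Condorcet-consistent. Since $\tfrac{4\varepsilon^2}{3}<\tfrac{1}{3}\varepsilon+\tfrac{2}{3}\varepsilon^2$ for all $\varepsilon<\tfrac{1}{2}$, no base tournament, however large and however much ``nested near-Condorcet structure'' it carries, can deliver your target against all Condorcet-consistent rules. The deficit you computed on the $3$-cycle is intrinsic to the restricted manipulation class, not to the example; the fix is simply to drop the restriction and let the manipulators throw the match outright.
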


\subsection{Gauntlets: Upper Bounding Gains from Bad Terminal Events}\label{sec:badterminal}
As previously noted, the main difference between our warmup and main results is bounding gain from bad terminal events. We provide a short, but key, structural insight about recursive tournament rules. Intuitively, a team $u$ wins under rule $r^E$ as long as they survive all elimination matches. Our key observation is that this defines a \emph{gauntlet} of teams such that $u$ wins if and only if they defeat every team in the gauntlet.

\begin{definition}[Gauntlet] For deterministic tournament $T$, recursive tournament rule $r$, and team $u$, let $T'$ be such that the outcome of the $(v,w)$ match is the same for all $v,w \neq u$, but $u$ is a Condorcet winner.

The \emph{gauntlet} for $u$ in tournament $T$ under recursive rule $r$, $G^r_u(T)$, is the set of teams that $u$ plays in elimination matches when $r$ is executed on $T'$. If $r$ is randomized, then $G^r_u(T)$ is a random variable. 

If $T$ is a probabilistic tournament, we extend the notation $G^r_u(T)$ to be the random variable which first samples $T$, then outputs $G^r_u(T)$ (again over randomness in $r$). 
\end{definition}

%\kimnote{I forgot what $n'$ meant at this point and had to go back and look it up. It's also hard to ctrl + F.} \mattnote{I guess there's no need to use it, can just replace with ceilings}.

For intuition, consider RSEB. $u$ wins RSEB if and only if they win each of their $\lceil \log_2(n) \rceil$ matches, so their gauntlet is a list of $\lceil \log_2(n) \rceil$ teams, one per round. For RDM, however, the set of matches that $u$ plays is itself a random variable (depending on how many times $u$ is selected to play), but $u$ still must win all these matches in order to win. For RKotH, the size of the gauntlet is a random variable as well. Importantly, however, observe that for all three rules (and any elimination rule), as soon as $u$ loses a match, they are eliminated, so their gauntlet opponents can be set assuming that $u$ won all previous matches.

Importantly, observe that $u$'s gauntlet does \emph{not} depend on the outcome of any of its own matches (because $T'$ immediately causes $u$ to win all its matches anyway). This lets us make the following key observation, which is the main part of our analysis which requires $E$ to be a matching elimination rule.

\begin{lemma}\label{lem:anonrecurse} Let $r^E$ be an anonymous recursive tournament rule \emph{with a matching elimination rule}, $T$ be an independent probabilistic tournament on $n$ teams, $M\leftarrow E(n)$, and $u,v$ be two teams. Let also $(u,v) \in M$ and  $w$ denote the winner of the $(u,v)$ match. Then the random variable $G_w^{r^E}(T|_M)$ is \emph{independent of $w$}.
\end{lemma}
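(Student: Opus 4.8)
The plan is to condition on all of the randomness that determines $T|_M$ \emph{except} the outcome of the $(u,v)$ match, and then show that the two possible values of $w$ induce the same conditional distribution for the gauntlet. Concretely, fix $M$ with $(u,v)\in M$ and condition on the outcomes of every match in $M$ other than $(u,v)$. Because $E$ is a \emph{matching} rule and $(u,v)\in M$, neither $u$ nor $v$ appears in any other match of $M$; hence these conditioned outcomes already determine the set $X$ of surviving teams outside $\{u,v\}$, and crucially $X$ does \emph{not} depend on who wins $(u,v)$. This is exactly the step that fails for non-matching rules such as RKotH, where the survivor $w$ of the $(u,v)$ match need not even belong to $T|_M$, or where $X$ could itself depend on the $(u,v)$ outcome.

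Next I would record two facts about $T|_M$. First, since $T$ is independent and every match in $M$ eliminates at least one of its participants, no match of $M$ is played between two surviving teams; therefore conditioning on the match outcomes leaves the surviving teams' pairwise probabilities unchanged, i.e. $p^{T|_M}_{ij}=p^T_{ij}$ for all $i,j\in X$ (the same observation used in the proof of Theorem~\ref{thm:recursive}). Call this fixed probabilistic tournament on $X$ by $T_X$. Second, if $w=u$ then $T|_M$ is $T_X$ together with the single extra team $u$, while if $w=v$ it is $T_X$ together with $v$; in particular both have exactly $|X|+1$ teams, so the recursion invokes $E(|X|+1)$ in either case.

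The heart of the argument is then the anonymity swap. By definition $G^{r^E}_w(T|_M)$ is computed by marking $w$ as a Condorcet winner and tracking the opponents $w$ faces in elimination matches; since $w$ is forced to win all its matches, the gauntlet does not depend on the outcomes of $w$'s own matches, so it is a function only of the pointed tournament $(T_X,w)$. Let $\sigma=(u\,v)$ be the transposition fixing every team of $X$. Then $\sigma$ carries the pointed tournament for $w=u$ exactly to the pointed tournament for $w=v$ (both carry the \emph{same} tournament $T_X$ on the pointwise-fixed set $X$, with only the marked team relabeled). By anonymity of $r^E$ the distribution of $w$'s elimination opponents is equivariant under $\sigma$; but every gauntlet opponent lies in $X$, which $\sigma$ fixes pointwise, so $\sigma$ acts as the identity on the gauntlet and the two distributions coincide. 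Thus, conditioned on the outcomes of $M\setminus\{(u,v)\}$, the law of $G^{r^E}_w(T|_M)$ is identical for $w=u$ and $w=v$.

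Finally I would uncondition. The variable $w$ (the outcome of $(u,v)$) is independent of the outcomes of the other matches of $M$, since $T$ is an independent tournament and these are disjoint matches; combining this with the conditional equality of laws just established yields that $G^{r^E}_w(T|_M)$ is independent of $w$ outright. I expect the main obstacle to be making the ``gauntlet is equivariant under relabeling'' step fully rigorous: anonymity as stated only constrains the winner distribution, so I would either strengthen it to equivariance of the entire match transcript (valid because $E$ selects its matches as a symmetric function of the \emph{number} of teams only) or argue directly that $G^{r^E}_w$ depends solely on the isomorphism type of the pointed tournament $(T_X,w)$.
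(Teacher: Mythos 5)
Your proof is correct and follows essentially the same route as the paper's: the matching property guarantees exactly one of $\{u,v\}$ (namely $w$) survives and that the remaining teams and their match probabilities in $T|_M$ are independent of which one it is, the gauntlet's definition erases the outcomes of $w$'s own matches, and anonymity (your transposition $\sigma=(u\,v)$) removes the remaining dependence on $w$'s label. Your closing concern --- that anonymity as defined only constrains the winner distribution rather than the full execution transcript --- is a legitimate subtlety that the paper's one-line appeal to anonymity also glosses over, and your proposed fix (equivariance of the execution because $E(n)$ depends only on the number of teams) is the right way to make that step rigorous.
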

\begin{proof}
Observe first that because $M$ is a matching which contains $(u,v)$, that exactly one of $\{u,v\}$ (namely, $w$) is present in $T|_M$. Moreover, because $T$ is independent probabilistic, the remaining teams in $T|_M$ are independent of $w$. Because the definition of $w$'s gauntlet immediately considers a tournament $T'$ which replaces the outcome of all matches involving $w$ by having $w$ be a Condorcet winner, the tournament $T'$ is independent of $w$ \emph{except for whether $w$ is labeled as `$u$' or `$v$'}. But because $r^E(\cdot)$ is anonymous, its behavior on $T'$ is independent of $w$'s label. This completes the proof.
\end{proof}

Note that Lemma~\ref{lem:anonrecurse} fails when $M$ is not a matching. This is because: (a) perhaps both $u$ and $v$ are eliminated, and $w$ is undefined, but also (b) even conditioned on $w$ being defined, the set of teams in $T|_M$ can depend on $w$. This is the main technical challenge in extending beyond matching elimination rules.\footnote{For example, in RKotH: conditioned on $w:=u$, we know either that $u$ was a Condorcet winner, or that $v$ was selected to play everyone, so all teams which lose to $v$ are eliminated. If instead $w:=u$, we know either that $v$ was a Condorcet winner, or that $u$ was selected to play everyone. The teams which lose to $u$ vs. $v$ could be different, so the Lemma fails to hold.}

\begin{corollary}\label{cor:anonrecurse} Let $r^E$ be an anonymous tournament rule with a matching elimination rule, $T\in \mathcal{T}^\varepsilon$, $S = \{u,v\}$ be any two teams, and $T'$ be any tournament which is $S$-adjacent to $T$. Then:

$$\mathbb{E}_{M \leftarrow E(n)}[r_S^E(T'|_M) - r_S^E(T|_M) | B] \leq 2\varepsilon(\tfrac{1}{2}+\varepsilon).$$

\end{corollary}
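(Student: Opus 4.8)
The plan is to use Lemma~\ref{lem:anonrecurse} to collapse the conditional expectation into a one-dimensional statement about the $(u,v)$ match. It suffices to bound $r_S^E(T'|_M) - r_S^E(T|_M)$ for each fixed matching $M$ with $(u,v) \in M$, since the target bound is uniform and we may then average over $M$ conditioned on $B$. Fix such an $M$. Because $E$ is a matching rule, exactly one of $u,v$ survives the round, namely the winner $w$ of their match, so $r_S^E(T|_M) = r_w^E(T|_M)$ is just the probability the survivor goes on to win. Invoking the gauntlet structure (a team wins iff it beats every opponent in its gauntlet) together with independence of $T$ (so $w$'s own matches are independent of the draw of its gauntlet, which depends only on matches not involving $w$), I would write $r_w^E(T|_M) = \mathbb{E}_{\mathcal{G}}\big[\prod_{g \in \mathcal{G}} p^T_{wg}\big]$, where $\mathcal{G} = G_w^{r^E}(T|_M)$.

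The crucial step is Lemma~\ref{lem:anonrecurse}: the distribution $\mathcal{D}$ of $\mathcal{G}$ is \emph{the same} whether $w = u$ or $w = v$, and since $\mathcal{G}$ consists only of teams other than $u,v$, the relevant probabilities $p_{ug}, p_{vg}$ are all non-$S$ matches and hence identical in $T$ and $T'$. Setting $A_u := \mathbb{E}_{\mathcal{G}\sim\mathcal{D}}[\prod_{g} p_{ug}]$ and $A_v := \mathbb{E}_{\mathcal{G}\sim\mathcal{D}}[\prod_{g} p_{vg}]$, these are the same quantities for $T$ and $T'$. Writing $p := p^T_{uv}$ and $q := p^{T'}_{uv}$ (the only match on which the two tournaments may differ), and folding the other outcomes of $M$ and the gauntlet randomness into $A_u, A_v$, I get $r_S^E(T|_M) = pA_u + (1-p)A_v$ and $r_S^E(T'|_M) = qA_u + (1-q)A_v$. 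The conditional gain therefore factors cleanly as $(q-p)(A_u - A_v)$.

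It remains to bound this product. I would first show $|A_u - A_v| \le 2\varepsilon$: for a fixed gauntlet of size $k$, the difference $\prod_g p_{ug} - \prod_g p_{vg}$ is maximized over $p_{ug}, p_{vg} \in [\tfrac12-\varepsilon,\tfrac12+\varepsilon]$ at the extreme point, giving at most $(\tfrac12+\varepsilon)^k - (\tfrac12-\varepsilon)^k$; factoring $a^k - b^k = (a-b)\sum_{i=0}^{k-1} a^{k-1-i}b^i$ with $a = \tfrac12+\varepsilon$, $b = \tfrac12-\varepsilon$, $a-b = 2\varepsilon$, and comparing the power sum termwise to $(a+b)^{k-1} = 1$ shows this is at most $2\varepsilon$. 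Taking expectation over $\mathcal{D}$ preserves the bound. Since $T \in \mathcal{T}^\varepsilon$ forces $p \ge \tfrac12-\varepsilon$ while $q \le 1$, we have $q - p \le \tfrac12+\varepsilon$ (and symmetrically in the opposite sign), so taking WLOG $A_u \ge A_v$ yields $(q-p)(A_u - A_v) \le (\tfrac12+\varepsilon)\cdot 2\varepsilon = 2\varepsilon(\tfrac12+\varepsilon)$, as claimed.

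I expect the genuine obstacle to be the clean factorization into $(q-p)(A_u - A_v)$, which relies entirely on Lemma~\ref{lem:anonrecurse}: without the gauntlet distribution being independent of the survivor $w$, the survivor's continuation probability would entangle the identity of $w$ with a $w$-dependent gauntlet and the decomposition would break (precisely the reason the argument fails for non-matching rules like RKotH). Given that lemma, the only quantitative subtlety is verifying $(\tfrac12+\varepsilon)^k - (\tfrac12-\varepsilon)^k \le 2\varepsilon$ uniformly in $k$, which the binomial comparison above handles.
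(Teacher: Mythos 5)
Your proposal is correct and follows essentially the same route as the paper's proof: both hinge on Lemma~\ref{lem:anonrecurse} to equate the gauntlet distributions of the two possible survivors of the $(u,v)$ match, represent the survivor's win probability as an expected product of match probabilities over its gauntlet, and extract the two factors $2\varepsilon$ (your binomial comparison is a valid self-contained substitute for the special case of Lemma~\ref{lem:calcs}) and $\tfrac{1}{2}+\varepsilon$ (from $p^T_{uv}\geq \tfrac{1}{2}-\varepsilon$ and $p^{T'}_{uv}\leq 1$). The only stylistic difference is that your exact identity $r_S^E(T'|_M)-r_S^E(T|_M)=(q-p)(A_u-A_v)$ delivers both factors in one algebraic step, whereas the paper first bounds the gain by $2\varepsilon$ via the gauntlet calculation and then obtains the extra factor of $\tfrac{1}{2}+\varepsilon$ through a separate coupling argument on the $(u,v)$ match outcome.
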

\begin{proof}
Because we are in a bad terminal event, this means that $(u,v) \in M$. If we let $w$ denote the winner of the $(u,v)$ match in $T$, and $w'$ denote the winner in $T'$, then we know that $G^{r^E}_{w}(T|_M)$ and $G^{r^E}_{w'}(T'|_M)$ are identically distributed by Lemma~\ref{lem:anonrecurse}. This means that we have:
\begin{align*}
r^E_S(T|_M)&=\mathbb{E}_{G_{w}^{r^E}(T|_M)}\left[\prod_{x \in G_{w}^{r^E}(T|_M)} p^{T}_{wx}\right]\\
r^E_S(T'|_M)&=\mathbb{E}_{G_{w'}^{r^E}(T'|_M)}\left[\prod_{x \in G_{w'}^{r^E}(T'|_M)} p^{T'}_{w'x}\right] =\mathbb{E}_{G_{w}^{r^E}(T|_M)}\left[\prod_{x \in G_{w}^{r^E}(T|_M)} p^{T}_{w'x}\right]\\
\Rightarrow r^E_S(T'|_M)-r^E_S(T|_M) &= \mathbb{E}_{G_{w}^{r^E}(T|_M)}\left[\prod_{x \in G_{w}^{r^E}(T|_M)} p^{T}_{w'x} -\prod_{x \in G_{w}^{r^E}(T|_M)} p^{T}_{wx}\right]\\
&\leq \mathbb{E}_{G_{w}^{r^E}(T|_M)}\left[\prod_{x \in G_{w}^{r^E}(T|_M)} (\tfrac{1}{2}+\varepsilon) -\prod_{x \in G_{w}^{r^E}(T|_M)} (\tfrac{1}{2}-\varepsilon)\right]\\
&= \mathbb{E}_{G_{w}^{r^E}(T|_M)}\left[(\tfrac{1}{2}+\varepsilon)^{|G_{w}^{r^E}(T|_M)|}-(\tfrac{1}{2}-\varepsilon)^{|G_{w}^{r^E}(T|_M)|}\right]\\
&\leq 2\varepsilon
\end{align*}

The first two lines follow by definition of the gauntlet, and Lemma~\ref{lem:anonrecurse}. The third line is basic algebra. The fourth line follows as $T \in \mathcal{T}^{\varepsilon}$. The fifth line is again basic algebra, and the final line invokes (a special case of) Lemma~\ref{lem:calcs}, which is stated below and proved in Appendix~\ref{app:epsilon}. The above calculations hold for any $T|_M, T'|_M$. 

\begin{lemma}\label{lem:calcs}
For all $n \in \mathbb{N}_{\geq 0}$, $i, j\in [0,n]$ and $\varepsilon \in [0,1/2]$: $(\tfrac{1}{2}+\varepsilon)^i \cdot (\tfrac{1}{2}-\varepsilon)^{n-i} - (\tfrac{1}{2}+\varepsilon)^j \cdot (\tfrac{1}{2}-\varepsilon)^{n-j} \leq 2\varepsilon$.
\end{lemma}

To see where the additional factor of $(\tfrac{1}{2}+\varepsilon)$ comes from, recall that $p^T_{uv} \in [1/2-\varepsilon,1/2+\varepsilon]$. Therefore, $p^{T'}_{uv} - p^T_{uv} \leq \tfrac{1}{2}+\varepsilon$. Consider now coupling the tournaments $T$ and $T'$ so the the result of every match except for the one between $u$ and $v$ is identical (this is possible because $T$ and $T'$ are independent probabilistic). Under this coupling, $T$ and $T'$ are identical with probability at least $\tfrac{1}{2}-\varepsilon$ (and gains from manipulation are clearly only possible when $T \neq T'$, which occurs with probability at most $\tfrac{1}{2}+\varepsilon$, and independently of $M$). Therefore, $T\neq T'$ with probability at most $(\tfrac{1}{2}+\varepsilon)$ and conditioned on this, $r^E_S(T'|_M)-r^E_S(T|_M) \leq 2\varepsilon$.
\end{proof}

\subsection{RDM and RSEB}
Corollary~\ref{cor:anonrecurse} is the main technical lemma to extend our analysis for $\mathcal{T}^{1/2}$ to $\mathcal{T}^{\varepsilon}$. We begin with RDM.

\begin{theorem}\label{thm:RDMmain} $\alpha_2^{\text{RDM}}(\mathcal{T}^{\varepsilon}) = \alpha_2(\mathcal{T}^\varepsilon) =\tfrac{1}{3} \varepsilon + \tfrac{2}{3} \varepsilon^2 $.
\end{theorem}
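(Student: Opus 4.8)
The plan is to establish all three equalities by a sandwich argument, reducing everything to a single upper bound. Because RDM is Condorcet-consistent (a Condorcet winner never loses a match, so it is never eliminated and wins with probability one), the definition of $\alpha_2(\mathcal{T}^\varepsilon)$ as an infimum over Condorcet-consistent rules immediately gives $\alpha_2(\mathcal{T}^\varepsilon) \leq \alpha_2^{\text{RDM}}(\mathcal{T}^\varepsilon)$. Combined with the lower bound $\alpha_2(\mathcal{T}^\varepsilon) \geq \tfrac{1}{3}\varepsilon + \tfrac{2}{3}\varepsilon^2$ from Lemma~\ref{lem:lb}, it suffices to prove the single inequality $\alpha_2^{\text{RDM}}(\mathcal{T}^\varepsilon) \leq \tfrac{1}{3}\varepsilon + \tfrac{2}{3}\varepsilon^2$; the three quantities are then squeezed to equality.

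For the upper bound I would invoke Theorem~\ref{thm:recursive}, which only asks for constants $b, g, c$ valid for every $T \in \mathcal{S}^\varepsilon$ and every $S$-adjacent $T'$, with $S = \{u,v\}$. Two of the three are immediate. The bad-terminal probability is $b = 1/\binom{n}{2}$, since RDM plays a single uniformly random match and $B$ occurs exactly when that match is $(u,v)$. Since RDM is anonymous and its elimination rule is trivially matching (one match per round), Corollary~\ref{cor:anonrecurse} supplies $c = 2\varepsilon(\tfrac{1}{2}+\varepsilon) = \varepsilon + 2\varepsilon^2$. Granting the good-terminal bound $g = 2/\binom{n}{2}$, Theorem~\ref{thm:recursive} yields $\alpha_2^{\text{RDM}}(\mathcal{T}^\varepsilon) \leq \frac{bc}{b+g} = \frac{c}{3} = \tfrac{1}{3}\varepsilon + \tfrac{2}{3}\varepsilon^2$, as desired.

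All the real work is in proving $\Pr[G] \geq 2/\binom{n}{2}$ for every $T \in \mathcal{S}^\varepsilon$, generalizing the deterministic count in Theorem~\ref{thm:RDMeasy}. I would partition on the disagreement set $D := \{w \notin \{u,v\} : p^T_{uw} \neq p^T_{vw}\}$. If $|D| = 0$ then $p^T_{uw} = p^T_{vw}$ for all $w$, so $T$ is a base case by Lemma~\ref{lem:anon} and there is nothing to prove. If $|D| = 1$, say $D = \{w\}$, then both matches $(u,w)$ and $(v,w)$ lead to a good terminal event with probability one, regardless of the match outcome: whichever of $u,v$ loses is eliminated (a base case), and whichever wins removes $w$, after which $u$ and $v$ agree on all surviving teams and Lemma~\ref{lem:anon} again gives a base case. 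Each of these two matches occurs with probability $1/\binom{n}{2}$, so $\Pr[G] \geq 2/\binom{n}{2}$.

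The remaining case $|D| \geq 2$ is the main obstacle, because removing a single team can no longer restore symmetry (at least one disagreement always survives), so only eliminations of $u$ or $v$ produce base cases, and each favorite now wins only with probability $\tfrac{1}{2}+\varepsilon < 1$. The clean way through is to sum elimination probabilities directly: over matches $(u,w)$ the event ``$u$ is eliminated'' has probability $p^T_{wu}$, and over $(v,w)$ the event ``$v$ is eliminated'' has probability $p^T_{wv}$, so $\binom{n}{2}\Pr[G] \geq \sum_{w \neq u,v}(p^T_{wu} + p^T_{wv})$. The key observation is that for each $w \in D$ exactly one of $p^T_{wu}, p^T_{wv}$ equals $\tfrac{1}{2}+\varepsilon$ and the other $\tfrac{1}{2}-\varepsilon$, so $w$ contributes exactly $1$; while every $w \notin D$ contributes $p^T_{wu}+p^T_{wv} \geq 1 - 2\varepsilon \geq 0$ since $\varepsilon \leq 1/2$. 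Hence the sum is at least $|D| \geq 2$, giving $\Pr[G] \geq 2/\binom{n}{2}$ and closing every case. The $\varepsilon \leq 1/2$ condition is exactly what is needed to discard the nonnegative ``agreement'' contributions, and the identity $p^T_{wu}+p^T_{wv}=1$ on $D$ is what recovers the clean factor of $2$ uniformly in $\varepsilon$.
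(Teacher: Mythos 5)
Your proposal is correct and follows essentially the same route as the paper's proof: you apply Theorem~\ref{thm:recursive} with $b = 1/\binom{n}{2}$, $c = 2\varepsilon(\tfrac{1}{2}+\varepsilon)$ from Corollary~\ref{cor:anonrecurse}, and $g \geq 2/\binom{n}{2}$ via the same case analysis on the disagreement set (your $|D|=0$ and $|D|=1$ cases match the paper's $\ell_u=0$ and $\ell_u=1,\ell_v=0$ cases, and your bound $\sum_{w}(p^T_{wu}+p^T_{wv}) \geq |D| \geq 2$ is just a cleaner phrasing of the paper's per-team count when $\ell_u+\ell_v \geq 2$). The only difference is that you spell out the sandwich argument (RDM's Condorcet-consistency plus Lemma~\ref{lem:lb}), which the paper leaves implicit.
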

\begin{proof}
Let $T \in \mathcal{S}^\varepsilon$. Recall that we need to lower bound the probability of a good terminal event, upper bound the probability of a bad terminal event, and upper bound the gains from manipulation in case of a bad terminal event. We have already upper bounded the gains from a bad terminal event using Corollary~\ref{cor:anonrecurse} and can therefore take $c = 2\varepsilon(\tfrac{1}{2}+\varepsilon)$.

To bound the probability of a bad terminal event, observe that a bad terminal event occurs only when $u$ plays $v$ in this round, which happens with probability exactly $1/\binom{n}{2}$. So RDM satisfies $b = 1/\binom{n}{2}$.

For the good terminal events, we claim that $\Pr[G] \geq 2/\binom{n}{2}$. First, observe that once we show this, we can plug into Theorem~\ref{thm:recursive} and conclude $\alpha_2^{\text{RDM}}(\mathcal{T}^{\varepsilon}) \leq \frac{2\varepsilon(\tfrac{1}{2}+\varepsilon)\cdot 1/\binom{n}{2}}{1/\binom{n}{2}+2/\binom{n}{2}} = \tfrac{1}{3} \varepsilon + \tfrac{2}{3} \varepsilon^2 $.

To see this bound, let $\ell_u$ denote the number of teams which beat $u$ with probability $(1/2+\varepsilon)$ but $v$ with probability $(1/2-\varepsilon)$, and $\ell_v$ denote the number of teams which beat $v$ with probabiltiy $(1/2+\varepsilon)$ but $u$ with probability $(1/2-\varepsilon)$. Without loss of generality let $\ell_u \geq \ell_v$. 

If $\ell_u = 0$, then Lemma~\ref{lem:anon} already establishes that $u$ and $v$ can gain nothing by manipulating. If $\ell_u +\ell_v \geq 2$, then whenever $u$ or $v$ plays a team (not in $\{u,v\}$) which beats them, we have a good terminal event (because $u$ or $v$ is already eliminated before having ever played the $(u,v)$ match, so manipulating the match has no impact). Observe that any team which beats $u$ (respectively, $v$) with probability $(1/2+\varepsilon)$ beats $v$ (respectively, $u$) with probability at least $(1/2-\varepsilon)$. Therefore, we have a good terminal event with probability at least $2(1/2+\varepsilon)/\binom{n}{2} + 2(1/2-\varepsilon)/\binom{n}{2}= 2/\binom{n}{2}$, as desired.\footnote{To be extra clear, the two matches $(w,u)$, $(w,v)$ together contribute probability $1/\binom{n}{2}$ to the probability of a good terminal event, as long as $w$ beats either $u$ or $v$ with probability $(1/2+\varepsilon)$. Because there are two such teams, we get this twice.}

If $\ell_u = 1, \ell_v = 0$, then let $w$ be the unique team which beats $u$ with probability $(1/2+\varepsilon)$ but $v$ with probability $(1/2-\varepsilon)$. We claim that if $w$ plays \emph{either} $u$ or $v$ that we have a good terminal event. Indeed, if $w$ wins, then either $u$ or $v$ are eliminated having never played the $(u,v)$ match. If instead $w$ loses this match, then $w$ is eliminated, but now $p^T_{ux} = p^T_{vx}$ for all remaining teams $x$, and Lemma~\ref{lem:anon} asserts that there are no further gains from manipulation. The probability that $w$ plays $u$ or $v$ is $2/\binom{n}{2}$, as desired.

This handles all possible cases, and establishes that $g \geq 2/\binom{n}{2}$ in all cases. Plugging into Theorem~\ref{thm:recursive} as described above completes the proof.
\end{proof}

The analysis for RSEB is again similar to RDM, but some calculations are more involved. 

\begin{theorem}\label{thm:RSEBmain} $\alpha_2^{\text{RSEB}}(\mathcal{T}^{\varepsilon}) = \alpha_2(\mathcal{T}^\varepsilon) =\tfrac{1}{3} \varepsilon + \tfrac{2}{3} \varepsilon^2 $.
\end{theorem}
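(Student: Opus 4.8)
The plan is to mirror the structure of the proof of Theorem~\ref{thm:RDMmain}, substituting the appropriate combinatorial bounds for RSEB. By Theorem~\ref{thm:recursive} and Lemma~\ref{lem:lb}, it suffices to exhibit valid constants $b,g,c$ for RSEB so that $\frac{bc}{b+g} = \tfrac{1}{3}\varepsilon + \tfrac{2}{3}\varepsilon^2$, since the lower bound $\alpha_2(\mathcal{T}^\varepsilon) \geq \tfrac{1}{3}\varepsilon + \tfrac{2}{3}\varepsilon^2$ is already established and $\alpha_2^{\text{RSEB}}(\mathcal{T}^\varepsilon) \geq \alpha_2(\mathcal{T}^\varepsilon)$ by definition. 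Since RSEB has a matching elimination rule and is anonymous, Corollary~\ref{cor:anonrecurse} immediately gives the bad-terminal-event bound $c = 2\varepsilon(\tfrac{1}{2}+\varepsilon)$, exactly as in RDM. So the entire content of the proof reduces to computing $b$ and showing $g = 2b$, which forces $\frac{bc}{b+g} = \frac{c}{3} = \tfrac{1}{3}\varepsilon + \tfrac{2}{3}\varepsilon^2$.

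First I would bound $b = \Pr[B]$. In RSEB, a uniformly random perfect matching is drawn on the $n' = 2^{\lceil \log_2 n\rceil}$ teams (including dummies). The probability that a fixed pair $(u,v)$ is matched to each other in a uniformly random perfect matching on $n'$ vertices is $\frac{1}{n'-1}$, so $b = \frac{1}{n'-1}$. The target is then to show $\Pr[G] \geq \frac{2}{n'-1} = 2b$, which yields the desired ratio.

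Next I would lower bound $\Pr[G]$, following the same case analysis on $\ell_u, \ell_v$ (the counts of teams beating $u$ but not $v$ with the higher probability, and vice versa) as in Theorem~\ref{thm:RDMmain}, using Lemma~\ref{lem:anon} to handle the $\ell_u = 0$ and $\ell_u = 1, \ell_v = 0$ base cases exactly as before. The structural point is that a good terminal event occurs in the first round whenever $u$ or $v$ is directly matched against a team that eliminates it (so the $(u,v)$ match is never reached and the remaining tournament is a base case), or when the single distinguishing team $w$ in the $\ell_u=1$ case is matched against $u$ or $v$. The main obstacle, and the only genuine difference from RDM, is that the probability a fixed pair is matched in the first round of RSEB is $\frac{1}{n'-1}$ rather than $\frac{1}{\binom{n}{2}}$, and dummy teams are present; I must check that each relevant pairing contributes the right probability and that good terminal events can also arise in the second round (when, e.g., $w$ beats $u$ but loses its own first-round match, leaving a base case). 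I would verify that summing over the two distinguishing teams (or the two matches involving the single team $w$) gives at least $\frac{2}{n'-1}$, accounting for the $(1/2+\varepsilon)$ versus $(1/2-\varepsilon)$ edge orientations as in the RDM footnote.

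I expect the delicate step to be confirming $g = 2b$ precisely rather than merely $g \geq 2b$; since the ratio $\frac{bc}{b+g}$ is decreasing in $g$, any valid lower bound $g \geq 2b$ already suffices for the upper bound $\alpha_2^{\text{RSEB}}(\mathcal{T}^\varepsilon) \leq \tfrac{1}{3}\varepsilon + \tfrac{2}{3}\varepsilon^2$, and combined with Lemma~\ref{lem:lb} this pins down equality. Hence the plan is: invoke Corollary~\ref{cor:anonrecurse} for $c$, compute $b = \frac{1}{n'-1}$, establish $g \geq \frac{2}{n'-1}$ by the $\ell_u,\ell_v$ case analysis with Lemma~\ref{lem:anon}, and substitute into Theorem~\ref{thm:recursive}. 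The one calculation requiring care is the matching probability in a random perfect matching and the bookkeeping of which first- and second-round pairings yield good terminal events; everything else transfers verbatim from the RDM argument.
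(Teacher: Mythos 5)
Your skeleton is exactly the paper's: take $c = 2\varepsilon(\tfrac{1}{2}+\varepsilon)$ from Corollary~\ref{cor:anonrecurse}, $b = 1/(n'-1)$, establish $g \geq 2/(n'-1)$, plug into Theorem~\ref{thm:recursive}, and invoke Lemma~\ref{lem:lb} for the matching lower bound (your observation that only $g \geq 2b$, not $g = 2b$, is needed is also correct). The genuine gap is your claim that the lower bound on $\Pr[G]$ ``transfers verbatim'' from Theorem~\ref{thm:RDMmain} with only the matching probability $1/(n'-1)$ replacing $1/\binom{n}{2}$. It does not, for a structural reason: in RDM exactly one match is played per round, so the elimination events attached to different distinguishing teams are mutually exclusive and their probabilities add exactly; in RSEB the round is a perfect matching, so events such as ``$u$ plays $w$'' and ``$v$ plays $x$'' co-occur, and the naive sum over distinguishing teams overcounts the union. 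Concretely, when $\ell_u + \ell_v = 2$ with distinguishing teams $w,x$, the four pairing-times-elimination probabilities sum to exactly $2/(n'-1)$, but the union of the elimination events is strictly smaller. The paper recovers the deficit by noting that on the overlap event --- $u$ and $v$ simultaneously matched against $w$ and $x$ --- a good terminal event occurs with probability $1$ rather than with the elimination probability: even if both $u$ and $v$ survive, both $w$ and $x$ are eliminated and Lemma~\ref{lem:anon} yields a base case. Your plan invokes Lemma~\ref{lem:anon} only in the $\ell_u + \ell_v \leq 1$ cases, so it falls strictly short of $2/(n'-1)$ here.

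The case $\ell_u + \ell_v \geq 3$ is worse still: the overlap can no longer be rescued by Lemma~\ref{lem:anon} (a third distinguishing team survives the round), and the paper switches to an entirely different argument --- condition on ``$u$ plays a team in $L_u$ or $v$ plays a team in $L_v$,'' bound the conditional probability that one of $\{u,v\}$ is eliminated, and show the conditioning event has probability at least $\tfrac{8/3}{n'-1}$. Even this does not close the case uniformly: the subcase $\ell_u = 2, \ell_v = 1$ with $n \in [5,8]$ (so $n' = 8$, with dummy teams present) evades these bounds, and the paper resolves it by an explicit enumeration of all pairing patterns of $u$ and $v$ against the distinguishing teams and the dummies, checking that the good-event probability is at least $2/7$. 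None of this appears in your proposal, and it is not bookkeeping: without the Lemma~\ref{lem:anon} boost in the $\ell_u+\ell_v=2$ case, and without a separate argument plus the $n'=8$ enumeration for $\ell_u+\ell_v \geq 3$, the bound $g \geq 2/(n'-1)$ --- and hence the theorem --- is not established.
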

\begin{proof}
Let $T \in \mathcal{S}^\varepsilon$. We have already upper bounded the gains from a bad terminal event using Corollary~\ref{cor:anonrecurse} and can therefore take $c = 2\varepsilon(\tfrac{1}{2}+\varepsilon)$.

To bound the probability of a bad terminal event, observe that a bad terminal event occurs only when $u$ plays $v$ this round, which happens with probability exactly $1/(n'-1)$.\footnote{Recall in RSEB that $n'$ denotes the total number of teams plus dummy teams, and is $2^{\lceil \log_2(n)\rceil}$}. So RSEB satisfies $b = 1/(n'-1)$.

For the good terminal events, we claim that $\Pr[G] \geq 2/(n'-1)$. First, observe that once we show this, we can plug into Theorem~\ref{thm:recursive} and conclude $\alpha_2^{\text{RSEB}}(\mathcal{T}^{\varepsilon}) \leq \frac{2\varepsilon(\tfrac{1}{2}+\varepsilon)\cdot 1/(n'-1)}{1/(n'-1)+2/(n'-1)} = \tfrac{1}{3} \varepsilon + \tfrac{2}{3} \varepsilon^2 $.

To see this bound (which requires more calculations than previous proofs), let $\ell_u$ denote the number of teams which beat $u$ with probability $(1/2+\varepsilon)$ but $v$ with probability $(1/2-\varepsilon)$, and $\ell_v$ denote the number of teams which beat $v$ with probabiltiy $(1/2+\varepsilon)$ but $u$ with probability $(1/2-\varepsilon)$. Without loss of generality let $\ell_u \geq \ell_v$.\\ 

\noindent\textbf{Case One: $\ell_u +\ell_v \leq 1$.} If $\ell_u = 0$, then Lemma~\ref{lem:anon} already establishes that $u$ and $v$ can gain nothing by manipulating. If $\ell_u = 1, \ell_v = 0$, then let $w$ be the unique team which beats $u$ with probability $(1/2+\varepsilon)$ but $v$ with probability $(1/2-\varepsilon)$. We claim that if $w$ plays \emph{either} $u$ or $v$ that we are in a good terminal event. Indeed, if $w$ wins this match, then either $u$ or $v$ are eliminated having never played the $(u,v)$ match. If instead $w$ loses this match, then $w$ is eliminated, but now $p^T_{ux} = p^T_{vx}$ for all remaining teams $x$, and Lemma~\ref{lem:anon} asserts that there are no further gains from manipulation. The probability that $w$ plays $u$ or $v$ is $2/(n'-1)$, as desired. \\

%\kimnote{Is there some way to break up the cases visually? Like \textbf{Case 1, Case 2}, etc.} \mattnote{Good call}

\noindent\textbf{Case Two: $\ell_u + \ell_v = 2$.} Next, consider the case where $\ell_u + \ell_v = 2$, and call the two relevant teams $x, w$. Observe first that if $u$ plays $x$ and $v$ plays $w$, or if $u$ plays $w$ and $v$ plays $x$, then we are surely in a good terminal event. This is because either (a) $u$ or $v$ is eliminated without having played the $(u,v)$ match, or (b) \emph{both} $x$ and $w$ are eliminated (allowing us to invoke Lemma~\ref{lem:anon}). This occurs with probability $\frac{2}{(n'-1)(n'-3)}$. There are also the cases where exactly one of $\{w,x\}$ plays a team in $\{u,v\}$. For any given pair $(a,b)$, with $a \in \{w,x\}$ and $b \in \{u,v\}$, this case occurs with probability $\frac{(n'-4)}{(n'-1)(n'-3)}$.\footnote{Because the probability that $a$ plays $b$ is $1/(n'-1)$, and the probability that the other two teams do not play, conditioned on this, is $(n'-4)/(n'-3)$.} Two of these cases contribute at least a $(1/2+\varepsilon)$ probability of eliminating the team in $\{u,v\}$, and the other two contribute at least a $(1/2-\varepsilon)$ probability. So in total, all four cases contribute at least $\frac{2(n'-4)}{(n'-1)(n'-3)}$, and together we get that a good terminal event occurs with probability at least:
\begin{align*}
\frac{2}{(n'-1)(n'-3)}+\frac{2(n'-4)}{(n'-1)(n'-3)}&= \frac{2(n'-3)}{(n'-1)(n'-3)} = \frac{2}{n'-1}
\end{align*}

\noindent\textbf{Case Three: $\ell_u + \ell_v \geq 3$.} Next, consider the case where $\ell_u + \ell_v \geq 3$ (observe that this implies $n \geq 5$). We will show that either $u$ or $v$ are eliminated with probability at least $2/(n'-1)$. Indeed, let $L_u$ denote the set of teams which beat $u$ with probability $(1/2+\varepsilon)$ (but not $v$), and $L_v$ denote the set of teams which beat $v$ with probability $(1/2+\varepsilon)$ (but not $u$). Then consider the case where $u$ plays a team in $L_u$, or $v$ plays a team in $L_v$. Conditioned on this, \emph{both} $u$ and $v$ survive with probability at most $(1/2+\varepsilon)(1/2-\varepsilon) \leq 1/4$, so one of $\{u,v\}$ is eliminated with probability at least $3/4$. So one sufficient condition would be to establish that $u$ plays a team in $L_u$ or $v$ plays a team in $L_v$ with probability at least $\frac{8/3}{n'-1}$ (because conditioned on this, one of $\{u,v\}$ is eliminated with probability $3/4$, for a total probability of at least $\frac{2}{n'-1}$ that one of $\{u,v\}$ is eliminated). In the subcase that $\ell_u \geq 3$, then the probability is in fact at least $3/(n'-1)$, as desired. Clearly, this probability is monotone in $\ell_u, \ell_v$, so this leaves the only remaining subcase as $\ell_u = 2, \ell_v = 1$.\\

\noindent\textbf{Subcase Three-A: $\ell_u = 2, \ell_v = 1$, $n \geq 9$.} For the subcase of $\ell_u = 2, \ell_v=1$, the probability that $u$ plays a team in $L_u$ or $v$ plays a team in $L_v$ is $\frac{2}{n'-1}+\frac{(n'-5)}{(n'-1)(n'-3)}$.\footnote{To see this, observe that the probability that $u$ plays a team in $L_u$ is $2/(n'-1)$. The probability that $u$ plays a team not in $L_u \cup L_v \cup \{v\}$ is $(n'-5)/(n'-1)$. Conditioned on this, the probability that $v$ plays the team in $L_v$ is $1/(n'-3)$ (and this is the only way that $v$ can possibly play the team in $L_v$ without $u$ playing $L_u$).} As $n'\geq 16$ (because $n \geq 9$ and $n'$ is a power of $2$), we have that $(n'-5)/(n'-3) \geq 11/13$, meaning that $\frac{2}{n'-1}+\frac{(n'-5)}{(n'-1)(n'-3)} \geq \frac{37/13}{n'-1} > \frac{8/3}{n'-1}$, which resolves this case by the work above.\\

\noindent\textbf{Subcase Three-B: $\ell_u =2,\ell_v = 1$, $\varepsilon \geq \frac{1}{2\sqrt{13}}, n \in [5,8]$.} When $n \in [5,8]$, we have $n' = 8$. This means that $(n'-5)/(n'-3) = 3/5$, and therefore $\frac{2}{n'-1}+\frac{(n'-5)}{(n'-1)(n'-3)} = \frac{13/5}{n'-1}$. Unfortunately, this is $< \frac{8/3}{n'-1}$, meaning that this case doesn't immediately resolve by the method above. In this range, we do an explicit case analysis. First, observe that $u$ and $v$ are \emph{least} likely to be eliminated when there are more dummy teams (because $u$ and $v$ beat dummy teams with probability one, but real teams with probably at most $1/2+\varepsilon$). So the worst case to consider is when $n=5$: two real teams beat only $u$, one beats only $v$, and there are three dummy teams. We just have to compute several cases:
\begin{itemize}
\item Perhaps $u$ plays one of the $\ell_u$ teams, and $v$ plays one of the $\ell_v$ teams. This occurs with probability $\frac{2}{7} \cdot \frac{1}{5}$, and eliminates $u$ or $v$ with probability $1-(1/2-\varepsilon)^2 = 3/4 +\varepsilon - \varepsilon^2$.
\item Perhaps $u$ plays one of the $\ell_u$ teams, and $v$ plays the other. This occurs with probability $\frac{2}{7} \cdot \frac{1}{5}$, and eliminates $u$ or $v$ with probability $1-(1/2+\varepsilon)(1/2-\varepsilon) = 3/4+\varepsilon^2$. 
\item Perhaps $u$ plays one of the $\ell_u$ teams, and $v$ plays a dummy team. This occurs with probability $\frac{2}{7} \cdot \frac{3}{5}$, and eliminutes $u$ with probability $1/2+\varepsilon$.
\item Perhaps $u$ plays one of the $\ell_v$ teams, and $v$ plays one of the $\ell_u$ teams. This occurs with probability $\frac{1}{7} \cdot \frac{2}{5}$, and eliminates $u$ or $v$ with probability $1-(1/2+\varepsilon)^2 = 3/4-\varepsilon -\varepsilon^2$.
\item Perhaps $u$ plays one of the $\ell_v$ teams, and $v$ plays a dummy team. This occurs with probability $\frac{1}{7} \cdot \frac{3}{5}$, and eliminates $u$ with probability $1/2-\varepsilon$.
\item Perhaps $u$ plays a dummy team, and $v$ plays one of the $\ell_v$ teams. This occurs with probability $\frac{3}{7} \cdot \frac{1}{5}$, and eliminates $v$ with probability $1/2+\varepsilon$.
\item Perhaps $u$ plays a dummy team, and $v$ plays one of the $\ell_u$ teams. This occurs with probability $\frac{3}{7} \cdot \frac{2}{5}$, and eliminates $v$ with probability $1/2-\varepsilon$. 
\item Perhaps both $u$ and $v$ play dummy teams. This happens with probability $\frac{3}{7} \cdot \frac{2}{5}$, but eliminates neither team.
\end{itemize}

So either $u$ or $v$ is eliminated (without playing each other) with probability:
\begin{align*}
\frac{2}{35} \cdot (3/4+\varepsilon-\varepsilon^2) &+ \frac{2}{35} \cdot (3/4+\varepsilon^2) + \frac{6}{35} \cdot (1/2+\varepsilon) + \frac{2}{35} \cdot (3/4-\varepsilon-\varepsilon^2)\\
 &+ \frac{3}{35}\cdot (1/2-\varepsilon) + \frac{3}{35}\cdot (1/2+\varepsilon) + \frac{6}{35}\cdot (1/2-\varepsilon)\\
&=\frac{27}{70} -\frac{2}{35} \varepsilon^2\\
&\geq \frac{26}{70} \\
&> 2/7.
\end{align*}

Note that when $n \in [5,8]$, the probability of a bad event is exactly $1/7$, and the above work establishes that in case Three-B, the probability of a good terminal event is at least twice that of a bad terminal event. \\

The arguments above handle all possible cases, and establishes that $g \geq 2/(n'-1)$ in all cases. Plugging into Theorem~\ref{thm:recursive} as described above completes the proof.
\end{proof}

\section{Conclusion and Open Problems}\label{sec:conclusion}
We take a beyond worst-case view on manipulability of tournament rules, and nail down optimal guarantees as a function of the uncertainty of match outcomes. Specifically, our main result shows that $\alpha_2(\mathcal{T}^\varepsilon)=\varepsilon/3 + 2\varepsilon^2/3$, and this is achieved \emph{for all $\varepsilon$} by Randomized Death Match and Randomized Single Elimination Bracket. Our main technical contribution is a framework to analyze recursive tournament rules.

There are two natural directions for future work. The first concerns further work in the probabilistic setting: does $\alpha_2^{\text{RKotH}}(\mathcal{T}^\varepsilon) = \varepsilon/3+2\varepsilon^2/3$? The main technical barrier is replacing Lemma~\ref{lem:anonrecurse}, which only holds for matching elimination rules.\footnote{There are other barriers to using our precise definitions, but these barriers seem semantic rather than substantial.} In addition, it is interesting to analyze probabilistic tournaments which are not independent. Here there are technical barriers to overcome (many of our steps do require independence), but also conceptual ones: if tournament match outcomes are correlated, should we consider manipulations which are correlated with external outcomes as well? If so, is there a natural way to consider manipulations which are ``not more correlated than the original tournament itself''?

A second direction concerns applications of our recursive framework towards other problems in approximately strategy-proof tournament design. For example, it is still an open question following~\cite{SchvartzmanWZZ20} what is $\alpha_k(\mathcal{T}^{1/2})$ for \emph{any} $k > 2$. Our recursive framework may prove useful for analyzing this, or at least determining the best achievable guarantees for recursive rules.

\bibliographystyle{alpha}
\bibliography{MasterBib}

\newcommand{\etalchar}[1]{$^{#1}$}
\begin{thebibliography}{BCE{\etalchar{+}}16}

\bibitem[AK10]{AltmanK10}
Alon Altman and Robert Kleinberg.
\newblock Nonmanipulable randomized tournament selections.
\newblock 2010.

\bibitem[APT09]{AltmanPT09}
Alon Altman, Ariel~D. Procaccia, and Moshe Tennenholtz.
\newblock Nonmanipulable selections from a tournament.
\newblock In {\em Proceedings of the 21st International Joint Conference on
  Artifical Intelligence}, IJCAI'09, pages 27--32, San Francisco, CA, USA,
  2009. Morgan Kaufmann Publishers Inc.

\bibitem[Ban85]{Banks85}
J.~S. Banks.
\newblock Sophisticated voting outcomes and agenda control.
\newblock {\em Social Choice and Welfare}, 1(4):295--306, Dec 1985.

\bibitem[BCE{\etalchar{+}}16]{BrandtCELP16}
Felix Brandt, Vincent Conitzer, Ulle Endriss, J{\'e}r{\^o}me Lang, and Ariel~D
  Procaccia.
\newblock {\em Handbook of Computational Social Choice}.
\newblock Cambridge University Press, 2016.

\bibitem[Csa17]{Csato17}
Laszlo Csato.
\newblock European qualifiers 2018 {FIFA} world cup qualification can be
  manipulated.
\newblock 09 2017.

\bibitem[Fis77]{Fishburn77}
Peter~C. Fishburn.
\newblock Condorcet social choice functions.
\newblock {\em SIAM Journal on Applied Mathematics}, 33(3):469--489, 1977.

\bibitem[FR92]{FisherR92}
David~C. Fisher and Jennifer Ryan.
\newblock Optimal strategies for a generalized “scissors, paper, and stone”
  game.
\newblock {\em The American Mathematical Monthly}, 99(10):935--942, 1992.

\bibitem[KSW16]{KimSW16}
Michael~P. Kim, Warut Suksompong, and Virginia~Vassilevska Williams.
\newblock Who can win a single-elimination tournament?
\newblock In {\em Proceedings of the Thirtieth {AAAI} Conference on Artificial
  Intelligence, February 12-17, 2016, Phoenix, Arizona, {USA.}}, pages
  516--522, 2016.

\bibitem[KW15]{KimW15}
Michael~P. Kim and Virginia~Vassilevska Williams.
\newblock Fixing tournaments for kings, chokers, and more.
\newblock In {\em Proceedings of the Twenty-Fourth International Joint
  Conference on Artificial Intelligence, {IJCAI} 2015, Buenos Aires, Argentina,
  July 25-31, 2015}, pages 561--567, 2015.

\bibitem[Las97]{Laslier97}
J.F. Laslier.
\newblock {\em Tournament Solutions and Majority Voting}.
\newblock Studies in Economic Theory (Berlin, Germany), 7. Springer, 1997.

\bibitem[LLB93]{LaffondLB93}
G.~Laffond, J.F. Laslier, and M.~Le Breton.
\newblock The bipartisan set of a tournament game.
\newblock {\em Games and Economic Behavior}, 5(1):182 -- 201, 1993.

\bibitem[Mil80]{Miller80}
Nicholas~R. Miller.
\newblock A new solution set for tournaments and majority voting: Further
  graph-theoretical approaches to the theory of voting.
\newblock {\em American Journal of Political Science}, 24(1):68--96, 1980.

\bibitem[Pau14]{Pauly14}
Marc Pauly.
\newblock Can strategizing in round-robin subtournaments be avoided?
\newblock {\em Social Choice and Welfare}, 43(1):29--46, 2014.

\bibitem[SSW17]{SchneiderSW17}
Jon Schneider, Ariel Schvartzman, and S.~Matthew Weinberg.
\newblock Condorcet-consistent and approximately strategyproof tournament
  rules.
\newblock In {\em 8th Innovations in Theoretical Computer Science Conference,
  {ITCS} 2017, January 9-11, 2017, Berkeley, CA, {USA}}, pages 35:1--35:20,
  2017.

\bibitem[SW84]{ShepsleW84}
Kenneth~A. Shepsle and Barry~R. Weingast.
\newblock Uncovered sets and sophisticated voting outcomes with implications
  for agenda institutions.
\newblock {\em American Journal of Political Science}, 28(1):49--74, 1984.

\bibitem[SW11]{StantonW11}
Isabelle Stanton and Virginia~Vassilevska Williams.
\newblock Rigging tournament brackets for weaker players.
\newblock In {\em {IJCAI} 2011, Proceedings of the 22nd International Joint
  Conference on Artificial Intelligence, Barcelona, Catalonia, Spain, July
  16-22, 2011}, pages 357--364, 2011.

\bibitem[SWZZ20]{SchvartzmanWZZ20}
Ariel Schvartzman, S.~Matthew Weinberg, Eitan Zlatin, and Albert Zuo.
\newblock Approximately strategyproof tournament rules: On large manipulating
  sets and cover-consistence.
\newblock In Thomas Vidick, editor, {\em 11th Innovations in Theoretical
  Computer Science Conference, {ITCS} 2020, January 12-14, 2020, Seattle,
  Washington, {USA}}, volume 151 of {\em LIPIcs}, pages 3:1--3:25. Schloss
  Dagstuhl - Leibniz-Zentrum f{\"{u}}r Informatik, 2020.

\bibitem[Wil10]{Williams10}
Virginia~Vassilevska Williams.
\newblock Fixing a tournament.
\newblock In {\em Proceedings of the Twenty-Fourth {AAAI} Conference on
  Artificial Intelligence, {AAAI} 2010, Atlanta, Georgia, USA, July 11-15,
  2010}, 2010.

\end{thebibliography}

\appendix
\section{Omitted Proofs}
\subsection{Omitted Proofs from Section~\ref{sec:prelim}}\label{app:prelim}
\begin{proof}[Proof of Proposition~\ref{prop:convex}]
The second statement of the lemma clearly follows from the first. It is trivial to see that $\alpha^r_k(\mathcal{T}^\varepsilon) \geq \alpha^r_k(\mathcal{S}^\varepsilon)$, as $\mathcal{S}^\varepsilon\subseteq \mathcal{T}^\varepsilon$. To get intuition for why $\alpha_k^r(\mathcal{T}) \leq \alpha_k^r(\mathcal{S}^\varepsilon)$, observe that every probabilistic tournament in $\mathcal{T}^\varepsilon$ can be written as a convex combination of probabilistic tournaments in $\mathcal{S}^\varepsilon$. Indeed, this intuition drives the proof.

To see this formally, let $T$ be an arbitrary (independent probabilistic) tournament in $\mathcal{T}^\varepsilon$, and let $T'$ be any $S$-adjacent (independent probabilistic) tournament for some $|S| \leq k$. Then $p_{ij}^T \in [1/2-\varepsilon,1/2+\varepsilon]$ for all $i,j$, by definition. Consider the following procedure to jointly sample tournaments from $T, T'$. The procedure first constructs another pair of probabilistic tournaments $U, U'$, and then samples from these.
\begin{enumerate}
\item For all $(i,j)$, set $q_{ij}:=\frac{p^T_{ij}-(1/2-\varepsilon)}{2\varepsilon}$. Observe that $q_{ij}\cdot (1/2+\varepsilon) + (1-q_{ij})\cdot (1/2-\varepsilon) = p_{ij}^T$, and that $q_{ij} \in [0,1]$ as $p_{ij}^T \in [1/2-\varepsilon,1/2+\varepsilon]$.
\item For each $(i,j)$, independently, set $p_{ij}$ equal to $(1/2+\varepsilon)$ with probability $q_{ij}$ and equal to $(1/2-\varepsilon)$ with probability $(1-q_{ij})$. 
\item For all $(i,j)$, set $p^{U}_{ij}:=p_{ij}$.
\item For all $(i,j)$ such that $\{i,j\} \not \subseteq S$, set $p^{U'}:=p^U$. For all $(i,j)$ such that $\{i,j\} \subseteq S$, set $p^{U'}_{ij}:=p^{T'}_{ij}$.
\item Draw a pair of tournaments according to $U, U'$ (independently, say).
\end{enumerate}

Observe that this procedure correctly draws two tournament according to $T, T'$. Indeed, it is easy to see for each output tournament that the outcome of each match is independent, simply because they are independent in $U$ (resp. $U'$), and because the random variables $p^U_{ij}$ (resp. $p^{U'}_{ij}$) are independent. Moreover, we claim that the probability that $i$ beats $j$ is exactly $p_{ij}^T$. Indeed, the probability that $i$ beats $j$ in $U$ is: $\mathbb{E}[p_{ij}^{U}] = q_{ij}\cdot (1/2+\varepsilon) + (1-q_{ij})\cdot (1/2-\varepsilon) = p_{ij}^T$. In $U'$, when $\{i,j\} \subseteq S$, the probability that $i$ beats $j$ in $U'$ is clearly $p_{ij}^{T'}$ by definition. When $\{i,j\} \not \subseteq S$, the probability that $i$ beats $j$ is also $p_{ij}^T$, which is equal to $p_{ij}^{T'}$ as $T, T'$ are $S$-adjacent.

Additionally, observe that: (a) $U \in \mathcal{S}^\varepsilon$ and (b) $U$ and $U'$ are $S$-adjacent. Now, consider any tournament rule $r(\cdot)$:

\begin{align*}
\text{For all $i$: } r_i(T)&=\mathbb{E}[r_i(U)]\\
\text{For all $i$: } r_i(T') &= \mathbb{E}[r_i(U')]\\
\Rightarrow r_S(T') - r_S(T) &= \mathbb{E}[ r_S(U')-r_S(U)]\\
&\leq \alpha_k^r(\mathcal{S}^\varepsilon)
\end{align*}

Indeed, the first two lines are simply linearity of expectation, once the previous work confirms that going through $U, U'$ is a valid way to draw tournaments from $T, T'$. The third line then also follows by linearity of expectation. The final line follows as $U \in \mathcal{S}^\varepsilon$, and $U,U'$ are $S$-adjacent. This completes the proof, as we have now shown that $\alpha_k^r(\mathcal{T}^\varepsilon) \leq \alpha_k^r(\mathcal{S}^\varepsilon)$.
\end{proof}

\subsection{Omitted Proofs from Section~\ref{sec:recursive}}\label{app:recursive}

\begin{proof}[Proof of Lemma~\ref{lem:recurse}]
The proof follows immediately from linearity of expectation. For all $i$, the probability that $i$ wins in tournament $T$ under rule $r^E(\cdot)$ is simply the expected probability that $i$ wins in the tournament $T|_M$. Summing over all $i \in S$, and repeating this for $T'$ yields the lemma.
\end{proof}

\subsection{Omitted Proofs from Section~\ref{sec:deterministic}}\label{app:warmup}

\begin{proof}[Proof of Lemma~\ref{lem:anon}]
Let $\sigma(\cdot)$ denote the permutation which swaps $u$ and $v$. Consider any two deterministic tournaments $U, \sigma(U)$. Then because $r(\cdot)$ is anonymous, we have:
\begin{align*}
r_u(U) + r_v(U) &= r_{\sigma(u)}(\sigma(U)) + r_{\sigma(v)}(\sigma(U))\\
&=r_{v}(\sigma(U)) + r_{u}(\sigma(U))\\
\Rightarrow r_S(U) &= r_S(\sigma(U))
\end{align*}

Indeed, the first line simply applies anonymity, and the second line simply applies $\sigma$. Now let's return to $T, T'$ (which are independent probabilistic tournaments, rather than deterministic). Consider the following process to draw $T$ and $T'$ jointly:
\begin{enumerate}
\item To emphasize that $p^T_{uw} = p^T_{vw}$, for all $w \notin \{u,v\}$, denote by $p^T_w:=p^T_{uw}$.
\item Without loss of generality, let $p^T_{uv} \leq p^{T'}_{uv}$.
\item Draw the outcome of all matches involving two teams both $\notin \{u,v\}$. Set the outcome of these matches the same for $T$ and $T'$.
\item For all $w \notin \{u,v\}$, draw $q_{w1}$ and $q_{w2}$ iid and uniformly from $[0,1]$. Draw $q_{uv}$ independently and uniformly from $[0,1]$.
\item For all $w \notin \{u,v\}$, have $u$ beat $w$ if and only if $q_{w1} < p^T_{w}$. Have $v$ beat $w$ if and only if $q_{w2} < p^T_{w}$. Have $u$ beat $v$ if and only if $q_{uv} < p^T_{uv}$.
\item If $q_{uv} \notin [p^T_{uv},p^{T'}_{uv}]$, set $T':=T$.
\item If $q_{uv} \in [p^T_{uv},p^{T'}_{uv}]$, then set $T':=\sigma(T)$.
\end{enumerate}

This process clearly satisfies that with probability one, either $T'= T$ or $T'=\sigma(T)$. By the work above, this means that $r_S(T)=r_S(T')$ as desired, as long as we confirm that this process validly samples both $T$ and $T'$. It is easy to see that the process is valid for $T$: the match outcomes are clearly independent, and any team $w$ beats $x$ if and only if a uniformly random draw from $[0,1]$ is $< p^T_{wx}$ (which happens with probability $p^T_{wx}$, as desired). 

To see that the process is valid for $T'$, observe first that $u$ beats $v$ with probability exactly $p_{uv}^{T'}$, because $u$ beats $v$ whenever $q_{uv} < p^{T'}_{uv}$. Moreover, after conditioning on $q_{uv}$, the outcome of each $(u,w)$ match is either set according to $q_{w1}$ (an independent, uniform draw from $[0,1]$), or $q_{w2}$ (also an independent, uniform draw from $[0,1]$). In either case, these match results are all set independently and with the correct probability (because $T, T'$ are $S$-adjacent, and because $q_{w1},q_{w2}$ are iid because $p^T_{uw} = p^T_{vw}$ for all $w$). This completes the proof.
\end{proof}

\begin{proof}[Proof of Theorem~\ref{thm:RSEBeasy}]
We again simply let $c = 1$ and take the trivial bound on the gain in bad terminal events.

To bound the probability of a bad terminal event, observe that a bad terminal event occurs only when $u$ plays $v$ this round, which happens with probability exactly $1/(n'-1)$.\footnote{Recall that in RSEB, $n':=2^{\lceil \log_2(n)\rceil}$.} So RSEB satisfies $b = 1/(n'-1)$.

For the good terminal events, we claim that $\Pr[G] \geq 2/(n'-1)$. First, observe that once we show this, we can plug into Theorem~\ref{thm:recursive} and conclude $\alpha_2^{\text{RSEB}}(\mathcal{T}^{1/2}) \leq \frac{1\cdot 1/(n'-1)}{1/(n'-1)+2/(n'-1)} = 1/3$. 

To establish this bound, let $\ell_u$ denote the number of teams which beat $u$ but not $v$, $\ell_v$ denote the number of teams which beat $v$ but not $u$. Without loss of generality let $\ell_u \geq \ell_v$.

If $\ell_u+\ell_v= 0$, then Lemma~\ref{lem:anon} already establishes that $u$ and $v$ gain nothing by manipulating. If $\ell_u\geq 2$, then whenever $u$ plays a team which beats them, we have a good terminal event (because $u$ is already eliminated before having ever played the $(u,v)$ match, so manipulating the match has no impact). This happens with probability at least $2/(n'-1)$. 

If $\ell_u = 1, \ell_v = 0$, then let $w$ denote the unique team which beats $u$ but not $v$. We now claim that if $w$ plays \emph{either} $u$ or $v$ that we are in a good terminal event. Indeed, if $w$ plays $u$, then $u$ is eliminated having never played the $(u,v)$ match. If instead $w$ plays $v$, then $w$ is eliminated, but now there are no remaining teams which beat $u$ but not $v$ (or vice versa) and Lemma~\ref{lem:anon} asserts that there are no further gains from manipulation. The probability that $w$ plays $u$ or $v$ is $2/(n'-1)$, as desired.

Finally, if $\ell_u = \ell_v = 1$, let $w$ denote the unique team which beats $u$ but not $v$, and $x$ denote the unique team which beats $v$ but not $u$ (this case requires slightly more calculations than RDM). If $u$ plays $w$, \emph{or} $v$ plays $x$, then at least one of $u,v$ is eliminated before the $(u,v)$ match is played, and therefore no gains are possible. This happens with probability $2/(n'-1) - \frac{1}{(n'-1)(n'-3)}$. In addition, if $u$ plays $x$ \emph{and} $v$ plays $w$, then both $x$ and $w$ will be eliminated, and Lemma~\ref{lem:anon} asserts that there are no further gains from manipulation. This occurs with probability $\frac{1}{(n'-1)(n'-3)}$. Therefore, a good terminal event happens with probability at least $2/(n'-1)$, as desired. This handles all possible cases, and establishes that $g \geq 2/(n'-1)$ in all cases. Plugging into Theorem~\ref{thm:recursive} as described above completes the proof.
\end{proof}

\subsection{Omitted Proofs from Section~\ref{sec:epsilon}}\label{app:epsilon}

\begin{proof}[Proof of Lemma~\ref{lem:lb}] Consider a 3-player tournament $T$ with teams $u$, $v$, and $w$, $p^T_{uv} = p^T_{vw} = p^T_{wu} = \tfrac{1}{2} + \varepsilon$.\footnote{That is, if $\varepsilon = 1/2$, this is the same 3-cycle example from~\cite{SchneiderSW17}.}

Noting that $T$ is randomized, there are two possible types of deterministic outcomes: outcomes where there is a Condorcet winner, and outcomes which form a cycle (either $v$ beats $u$, $u$ beats $w$, and $w$ beats $v$, or vice versa). 

In $T$, each of the three players has probability $(\tfrac{1}{2} + \varepsilon)(\tfrac{1}{2} - \varepsilon)$ of being a Condorcet winner.

Call the cycle where $v$ beats $u$ ``cycle 1''; this occurs with probability $(\tfrac{1}{2} + \varepsilon)^3$. Call the opposing cycle (where $u$ beats $v$) ``cycle 2''; this occurs with probability $(\tfrac{1}{2} - \varepsilon)^3$.

Let $r$ be any Condorcet-consistent tournament rule. Denote by $\gamma_x$ the probability that the rule selects $x$ as the winner when cycle 1 occurs (for any $x \in \{u,v,w\}$). Denote by $\beta_x$ the probability that the rule selects $x$ as the winner when cycle 2 occurs. Recall that $r$ must select $x$ as the winner with probability $1$ when $x$ is the Condorcet winner.

Suppose that $u$ and $v$ collude so that $u$ throws their match to $v$. Specifically, let $T^{uv}$ denote the $\{u,v\}$-adjacent tournament to $T$ where $p^{T^{uv}}_{uv}=0$ (instead of $1/2+\varepsilon$). In $T^{uv}$, $v$ is a Condorcet winner with probability $\tfrac{1}{2} + \varepsilon$ (they need only beat $w$), cycle 2 occurs with probability $(\tfrac{1}{2} - \varepsilon)^2$ ($v$ must lose to $w$, who must lose to $u$), and $w$ is a Condorcet winner with probability $(\tfrac{1}{2} + \varepsilon)(\tfrac{1}{2} - \varepsilon)$. No other outcomes are possible. We then have:

\begin{align*}
r_u(T^{uv}) + r_v(T^{uv}) - r_u(T) - r_v(T) &= (\tfrac{1}{2}+\varepsilon) + (\beta_u + \beta_v)\cdot (\tfrac{1}{2} - \varepsilon)^2 \\
&\qquad- 2(\tfrac{1}{2} + \varepsilon)(\tfrac{1}{2} - \varepsilon) -(\beta_u + \beta_v)\cdot (\tfrac{1}{2} - \varepsilon)^3 -  (\gamma_u + \gamma_v)(\tfrac{1}{2} + \varepsilon)^3\\
&=2\varepsilon \cdot (\tfrac{1}{2}+\varepsilon) + (\beta_u + \beta_v)\cdot ((\tfrac{1}{2} - \varepsilon)^2- (\tfrac{1}{2} - \varepsilon)^3) -  (\gamma_u + \gamma_v)(\tfrac{1}{2} + \varepsilon)^3
\end{align*}
Identical calculations hold for $T^{vw}, T^{wu}$. We can sum these three together to achieve:
\begin{align*}
&(r_u(T^{uv}) + r_v(T^{uv}) - r_u(T) - r_v(T))\\
+&(r_v(T^{vw}) + r_w(T^{vw} )- r_v(T) - r_w(T))\\
+ &(r_w(T^{wu}) + r_u(T^{wu}) - r_w(T) - r_u(T))\\
 &\qquad\qquad=6\varepsilon \cdot (\tfrac{1}{2}+\varepsilon) + 2(\beta_u + \beta_v+\beta_w)\cdot ((\tfrac{1}{2} - \varepsilon)^2- (\tfrac{1}{2} - \varepsilon)^3) -  2(\gamma_u + \gamma_v+\gamma_w)(\tfrac{1}{2} + \varepsilon)^3\\
 &\qquad\qquad=6\varepsilon \cdot (\tfrac{1}{2}+\varepsilon) + 2\cdot (\tfrac{1}{2} - \varepsilon)^2\cdot (\tfrac{1}{2}+\varepsilon) -  2(\tfrac{1}{2} + \varepsilon)^3\\
 &\qquad\qquad=6\varepsilon \cdot (\tfrac{1}{2}+\varepsilon) + 2\cdot (\tfrac{1}{2}+\varepsilon) \cdot( (\tfrac{1}{2} - \varepsilon)^2-  (\tfrac{1}{2} + \varepsilon)^2)\\
 &\qquad\qquad=6\varepsilon \cdot (\tfrac{1}{2}+\varepsilon) + 2\cdot (\tfrac{1}{2}+\varepsilon) \cdot(-2\varepsilon)\\
 &\qquad\qquad=2\varepsilon \cdot (\tfrac{1}{2}+\varepsilon) 
\end{align*}

This means that at least one of the three coalitions can gain at least $\tfrac{2}{3}\varepsilon(\tfrac{1}{2}+\varepsilon) = \tfrac{1}{3} \varepsilon + \tfrac{2}{3} \varepsilon^2$.
\end{proof}

\begin{proof}[Proof of Lemma~\ref{lem:calcs}]
First, observe that $(\tfrac{1}{2}+\varepsilon)^i \cdot (\tfrac{1}{2}-\varepsilon)^{n-i} - (\tfrac{1}{2}+\varepsilon)^j \cdot (\tfrac{1}{2}-\varepsilon)^{n-j} \leq (\tfrac{1}{2}+\varepsilon)^n - (\tfrac{1}{2}-\varepsilon)^n$. Next, we claim that for all $n$, $(\tfrac{1}{2}+\varepsilon)^n - (\tfrac{1}{2}-\varepsilon)^n \leq 2\varepsilon$. We begin by showing that $(\tfrac{1}{2}+\varepsilon)^n - (\tfrac{1}{2}-\varepsilon)^n \leq (\tfrac{1}{2}+\varepsilon)^{n-1} - (\tfrac{1}{2}-\varepsilon)^{n-1}$ whenever $n \geq 2$. Indeed:

\begin{align*}
(\tfrac{1}{2}+\varepsilon)^n - (\tfrac{1}{2}+\varepsilon)^{n-1} &=-(\tfrac{1}{2} - \varepsilon)(\tfrac{1}{2} + \varepsilon)^{n-1}\\
(\tfrac{1}{2} - \varepsilon)^{n} - (\tfrac{1}{2} - \varepsilon)^{n-1} &= -(\tfrac{1}{2} + \varepsilon)(\tfrac{1}{2} - \varepsilon)^{n-1}.
\end{align*}

So when $n\geq 2$:
\begin{align*}
(\tfrac{1}{2}+\varepsilon)^n - (\tfrac{1}{2}+\varepsilon)^{n-1} \leq (\tfrac{1}{2} - \varepsilon)^{n} - (\tfrac{1}{2} - \varepsilon)^{n-1}\\
\Rightarrow (\tfrac{1}{2}+\varepsilon)^n - (\tfrac{1}{2}-\varepsilon)^{n} \leq (\tfrac{1}{2} + \varepsilon)^{n-1} - (\tfrac{1}{2} - \varepsilon)^{n-1}
\end{align*}

Now, applying this inductively yields:

$$(\tfrac{1}{2}+\varepsilon)^n - (\tfrac{1}{2}-\varepsilon)^{n} \leq (\tfrac{1}{2} + \varepsilon)^{1} - (\tfrac{1}{2} - \varepsilon)^{1} = 2\varepsilon.$$

This completes the proof whenever $n \geq 1$, as $(\tfrac{1}{2}+\varepsilon)^i \cdot (\tfrac{1}{2}-\varepsilon)^{n-i} - (\tfrac{1}{2}+\varepsilon)^j \cdot (\tfrac{1}{2}-\varepsilon)^{n-j} \leq (\tfrac{1}{2}+\varepsilon)^n - (\tfrac{1}{2}-\varepsilon)^n \leq 2\varepsilon$. When $n = 0$, the claim clearly holds.

\end{proof}

\end{document}